\numberwithin{equation}{section}
\newtheorem{theorem}{Theorem}[section]
\newtheorem{proposition}[theorem]{Proposition}
\allowdisplaybreaks \numberwithin{equation}{section}
\newcommand{\PP}{{\mathbb P}}
\def\Im{\mathop{\rm Im}\nolimits}
\def\Re{\mathop{\rm Re}\nolimits}
\def\diag{\mathop{\rm Diag}\nolimits}
\def\dim{\mathop{\rm dim}\nolimits}
\def\adj{\mathop{\rm Adj}\nolimits}
\def\Adj{\mathop{\rm  Adj}\nolimits}
\def\im{\mathop{\imath}}
\begin{document}

\title[Construction of Monopoles]{The Construction of Monopoles}
\author{H.W. Braden}
\address{School of Mathematics, Edinburgh University, Edinburgh.}
\email{hwb@ed.ac.uk}
\author{V.Z. Enolski}
\address{National University of \lq\lq Kyiv-Mohyla Academy'', 
Kyiv, Ukraine, 04655.}
\email{venolski@googlemail.com}

\begin{abstract}
We show that the Higgs and gauge fields for a BPS monopole may be constructed directly from the spectral curve without having to solve the gauge constraint needed to obtain the Nahm data. The result is the analogue of the instanton result: given ADHM data one can reconstruct the gauge fields algebraically together with differentiation. Here, given the spectral curve, one can similarly reconstruct the Higgs and gauge fields. This answers a problem that has remained open since the discovery of monopoles.
\end{abstract}

\maketitle

\section{Introduction}
Despite the study of BPS monopoles being a mature subject, now over 35 years old, and having uncovered many remarkable results, a number of the original questions that sparked its development remain unanswered. They are hard. One can ask, for example, what the Higgs and gauge fields or their gauge invariant energy density are, but beyond the spherically symmetric coincident $n$-monopole solution  and some partial results for $SU(2)$ charge $2$ monopoles there are no explicit formulae. Although a number of general methods have been developed to address this, extending earlier analogous results for the construction of instantons (see below), these constructions typically involve a step (such as solving an ODE)
that stymie attempts at an analytic solution.  Numerical results based on these constructions and utilising the increase in computing power over the period has meant that we can understand a number of qualitative  aspects of monopole behaviour; at the very least analytic solutions would give some control over these.
In this paper we shall describe how to explicitly construct the Higgs and gauge fields for a monopole
and circumvent those usually intractable steps; a number of new results will appear in the process. 

We will focus here on $SU(2)$ charge-$n$ monopoles.
Our approach will assume from the outset that we know the spectral curve $\mathcal{C}$ for the monopole. This curve
appears in both Nahm's extension \cite{nahm_80} of the ADHM construction of instantons and Ward's 
\cite{Ward1981a} use of the
$\mathcal{A}_k$ ansatz which Atiyah and Ward used in their instanton construction; it is also implicit in the 
B\"acklund transformation construction of Forg{\'a}cs,  Horv{\'a}th  and Palla \cite{forgacs_horvath_palla_82b}. This curve gives a point in the moduli space of $SU(2)$ charge-$n$ monopoles and we note that the relationship between this description of the moduli space and both Donaldson's rational map and Jarvis's rational map descriptions remains still poorly understood. In this paper we will work within the Nahm construction \cite{nahm82c, hitchin_83}. This involves two
potentially difficult steps: first, the solution
of a first order (matrix) differential equation $\Delta\sp\dagger v=0$; and second, the integration of
appropriate bilinears of these solutions to give the Higgs and gauge fields.
Now the operator $\Delta\sp\dagger$ is in turn constructed from
Nahm data, matrices $T_i(s)$ ($i=1,2,3$) that satisfy Nahm's equations $\dot T_1=[T_2,T_3]$ (and cyclic)
and certain boundary conditions that will be described in more detail below. Unfortunately Nahm data is hard to construct. Some years ago Ercolani and Sinha \cite{ercolani_sinha_89} showed how, using integrable systems techniques, one could solve for a gauge transform of the Nahm data directly from $\mathcal{C}$; this theory has been
further extended by the authors \cite{Braden2010d,Braden2005a}. Determining the gauge transformation is equivalent to solving a further ODE and although its solution exists we don't know how to do this explicitly; we shall relate this to Donaldson's treatment  \cite{donaldson84} of the complex and real moment map description of the Nahm equations. Putting this difficulty to one side for the moment we show how integrable systems techniques and a \emph{lesser known ansatz of Nahm} may be used to solve $\Delta\sp\dagger v=0$, again up to the same gauge transformation. Not all of the
solutions obtained are normalisable and in due course we show that constructing an appropriate projector to these is purely algebraic. But even with these solutions we must perform a number of integrations to
obtain expressions for the sought after fields.  An old work of Panagopoulos \cite{panagopo83} is appropriate here: after some
correction and small extensions we show that not only can all integrations be performed but the as yet undetermined gauge transformation combines within the bilinears into a term that is determined by the spectral curve.  Rather than getting overly involved in the mathematics of Riemann surfaces that any example will necessitate (see for example \cite{Braden2011a,Braden2010b}) this paper will give general formulae for the fields directly in terms of the curve and a well-studied algebro-geometric object, the Baker-Akhiezer function. We conclude with a limited example showing how our construction yields a known result
in the charge $2$ setting; in a sequel paper we shall present the general results for the fields of the charge $2$ monopole.

\section{The ADHMN construction}

In this section we recall the salient features of the ADHMN construction of monopoles sufficient to introduce our notation and the major features of the construction.

The equations we wish to solve are 
\begin{equation} D_i\Phi= \frac12 \sum_{j,k=1}^3\epsilon_{ijk}
F_{jk},\quad i=1,2,3,\label{bogomolny}\end{equation} 
for the gauge group $SU(2)$. Here  $\Phi$ is the Higgs field, $
F_{ij}=\partial_i A_j -\partial_j A_i+[A_i,A_j]$ is the
curvature of the (spatial) connection of the gauge field
$A_i(\boldsymbol{x})$ and $D_i$  the covariant derivative
$D_i\Phi=\partial_i\Phi+[A_i,\Phi]$,
$\boldsymbol{x}=(x_1,x_2,x_3)\in \mathbb{R}^3$.
These equations may be viewed as a reduction of the self-dual Yang Mills equations to three dimensions under the assumption that all fields are independent of time. Upon
identifying the $A_4$-component of the gauge field with the Higgs
field $\Phi $ the four-dimensional Yang-Mills Lagrangian yields 
upon reduction the three dimensional Yang-Mills-Higgs Lagrangian
\[L= -\frac12 \mathrm{Tr}\, F_{ij} F^{ij}+\mathrm{Tr}\,D_{i}\Phi\, D^{i}\Phi.\]
 We are interested
in configurations minimizing the energy of the system. These are
given by the {\it Bogomolny equation} (\ref{bogomolny})
 A solution with the boundary conditions
\[\left. \sqrt{-\frac12 \mathrm{Tr}\,\Phi(r)^2}\right|_{r\rightarrow\infty}\sim
1-\frac{n}{2r}+O(r^{-2}),\quad r=\sqrt{x_1^2+x_2^2+x_3^2} ,\] 
is called a {\it monopole} of  charge $n$.

Modifying the Atiyah-Drinfeld-Hitchin-Manin (ADHM) construction of instanton solutions to the (Euclidean) self-dual Yang-Mills equations  Nahm introduced the operator
\begin{align}\label{defdelta}
\Delta&=\im \dfrac{d}{dz}+x_4-\im T_4+\sum_{j=1}\sp{3}   \sigma_j\otimes( T_j+\im x_j 1_n),
\end{align}
where the $T_j(z)$ are $n\times n$ matrices and $\sigma_j$ the Pauli matrices.
Following the instanton construction the operator $\Delta\sp\dagger\Delta$  must commute with quaternions which happens if and only if ${T_i}\sp\dagger=-T_i$, $T_4\sp\dagger =-T_4$ and
\begin{equation}\label{fullnahm}
\dot{{T}_i} =[T_4,T_i]+\frac12\sum_{j,k=1}^3\epsilon_{ijk}[T_j(s),T_k(s)].
\end{equation}
Equations (\ref{fullnahm}) are known as Nahm's equations; one often encounters them in the more familiar gauge with $T_4=0$. When $\Delta\sp\dagger\Delta$ commutes\footnote{Throughout the superscript $\dagger$ means conjugated and transposed. We will at times emphasise the vectorial nature of an object by 
 printing this in bold,  e.g for vector
$\boldsymbol{a}^{\dagger}=\overline{\boldsymbol{a}}^T$. }
 with quaternions it is a positive operator; in particular this means that $\left(\Delta\sp\dagger\Delta\right)(z)$ is an invertible operator
and consequently $\Delta$ has no zero modes. To describe monopoles the matrices  $T_j(z)$ are further required to be regular for  $z\in(-1,1)$ and have simple poles at $z=\pm1$, the residues of which define an
irreducible $n$-dimensional representation of the $su(2)$ algebra. Hitchin's analysis
\cite{hitchin_83}[\S2]  of the equation 
$\Delta\sp\dagger \boldsymbol{v}=0$ tells us that 
has two normalizable solutions and it is in terms of these that  the
Atiyah-Drinfeld-Hitchin-Manin-Nahm (ADHMN) construction gives the gauge and Higgs field solutions.
\begin{theorem} [\bf ADHMN] \label{ADHMN} The charge $n$
monopole solution of the Bogomolny equation (\ref{bogomolny}) is given by
\begin{align}
\Phi_{ab}(\boldsymbol{x})&=\im\int_{-1}^1\mathrm{d}z\,z \boldsymbol{v}_a\sp
\dagger(\boldsymbol{x},z)\boldsymbol{v}_b(\boldsymbol{x},z)
,\quad a,b=1,2,\label{higgs}\\
A_{i\, ab}(\boldsymbol{x})&=\int_{-1}^1\mathrm{d}z\, \boldsymbol{v}_a\sp\dagger(\boldsymbol{x},z)
\frac{\partial}{\partial x_i} \boldsymbol{v}_b(\boldsymbol{x},z),
\quad
i=1,2,3,\quad a,b=1,2.\label{gauges}
\end{align}
Here the two ($a=1,2$) $2n$-column vectors ${{\boldsymbol{v}}}_{a}(\boldsymbol{x},z)
=({v}_1^{(a)}(\boldsymbol{x},z),\ldots,
{v}_{2n}^{(a)}(\boldsymbol{x},z))^T$ form an orthonormal basis on
the interval $z\in[-1,1]$
\begin{equation}\int_{-1}^1\mathrm{d}z\, \boldsymbol{v}_a\sp\dagger(\boldsymbol{x},z)
\boldsymbol{v}_b(\boldsymbol{x},z)
=\delta_{ab}
,\label{norm}
\end{equation}
for the normalizable solutions to the Weyl equation
\begin{align}\Delta\sp\dagger \boldsymbol{v}=0,\label{deltadagger}
\end{align}
where 
\begin{align}
\Delta^{\dagger}&=\im \dfrac{d}{dz}+x_4-\im T_4-\sum_{j=1}\sp{3}   \sigma_j\otimes( T_j+\im x_j 1_n).
\label{weylequ}\end{align} 
The normalizable solutions form a two-dimensional subspace of the full $2n$-dimensional solution space
to the formal adjoint equation (\ref{deltadagger}).
The $n\times n$-matrices $T_j(z)$, $T_4(z)$, called Nahm data,  satisfy
Nahm's equation (\ref{fullnahm}) and the $T_j(z)$ are
required to be regular for  $z\in(-1,1)$ and have simple poles at $z=\pm1$, the residues of which define an
irreducible $n$-dimensional representation of the $su(2)$ algebra;
further
\begin{equation}
T_i(z)=-T_i^{\dagger}(z),\quad
T_4(z)=-T_4^{\dagger}(z),\quad
T_i(z)=T_i^{T}(-z),\quad
T_4(z)=T_4^{T}(-z)
.\label{constraint}\end{equation}
\end{theorem}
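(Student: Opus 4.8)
The plan is to follow the classical reciprocity argument of Nahm and Hitchin, reducing the Bogomolny equation \eqref{bogomolny} to a purely algebraic identity in the Pauli matrices together with a single Green's-function relation. Throughout I write $\langle \boldsymbol{f},\boldsymbol{g}\rangle=\int_{-1}^1\mathrm{d}z\,\boldsymbol{f}^\dagger\boldsymbol{g}$ and collect the two normalizable solutions into a $2n\times 2$ matrix $\boldsymbol{v}=(\boldsymbol{v}_1,\boldsymbol{v}_2)$, so that $\langle\boldsymbol{v},\boldsymbol{v}\rangle=1_2$ by \eqref{norm}, while $A_i=\langle\boldsymbol{v},\partial_i\boldsymbol{v}\rangle$ and $\Phi=\im\langle\boldsymbol{v},z\boldsymbol{v}\rangle$ by \eqref{higgs}--\eqref{gauges}; both are anti-Hermitian $2\times 2$ matrices, the reduction to $su(2)$ using the reality constraints \eqref{constraint}.

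First I would extract the consequences of $\Delta^\dagger\Delta$ commuting with the quaternions. Expanding $\Delta^\dagger\Delta$ with $\sigma_j\sigma_k=\delta_{jk}1_2+\im\epsilon_{jkl}\sigma_l$, the coefficient of each $\sigma_k$ is $[a,b_k]-\im\sum_{ij}\epsilon_{ijk}b_ib_j$ with $a=\im\tfrac{d}{dz}+x_4-\im T_4$ and $b_j=T_j+\im x_j1_n$; its vanishing is \emph{precisely} Nahm's equation \eqref{fullnahm}, leaving $\Delta^\dagger\Delta=1_2\otimes R$ for a positive invertible $R$. Hence $G:=(\Delta^\dagger\Delta)^{-1}=1_2\otimes R^{-1}$ is well defined and commutes with every $\sigma_j\otimes 1_n$. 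Using $G$ I would establish the completeness (reciprocity) relation
\[
1=\boldsymbol{v}\boldsymbol{v}^\dagger+\Delta\,G\,\Delta^\dagger,
\]
valid on $L^2([-1,1])^{\oplus 2n}$: both summands are orthogonal projectors because $\Delta^\dagger\boldsymbol{v}=0$ and $(\Delta G\Delta^\dagger)^2=\Delta G\Delta^\dagger$, and their ranges exhaust the space since $\Delta^\dagger\Delta$ is invertible.

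Next I would differentiate the defining relations. Since $\partial_{x_i}\Delta^\dagger=-\im(\sigma_i\otimes 1_n)$, differentiating $\Delta^\dagger\boldsymbol{v}=0$ gives $\Delta^\dagger\partial_i\boldsymbol{v}=\im(\sigma_i\otimes 1_n)\boldsymbol{v}$; applying the completeness relation yields the fundamental formula
\[
\partial_i\boldsymbol{v}=\boldsymbol{v}A_i+\im\,\Delta\,G\,(\sigma_i\otimes 1_n)\boldsymbol{v}.
\]
The Higgs field is controlled by the same mechanism in the Fourier-dual fourth direction: from $[\Delta^\dagger,z]=\im$ one gets $\Delta^\dagger(z\boldsymbol{v})=\im\boldsymbol{v}$, so multiplication by $z$ plays the role of a fourth, scalar ``Pauli matrix'', encoding the reduction from four-dimensional self-duality. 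Substituting the fundamental formula into $F_{ij}=\partial_iA_j-\partial_jA_i+[A_i,A_j]$ and into $D_i\Phi=\partial_i\Phi+[A_i,\Phi]$, integrating by parts to move $\Delta$ onto the conjugate factor, and using $[G,\sigma_j\otimes 1_n]=0$ throughout, collapses every term to bilinears of the form $\langle(\sigma_a\otimes 1_n)\boldsymbol{v},\,G\,(\sigma_b\otimes 1_n)\boldsymbol{v}\rangle$. The Pauli identity $\sigma_i\sigma_j=\delta_{ij}1_2+\im\epsilon_{ijk}\sigma_k$ then antisymmetrises these contractions into exactly the $\tfrac12\epsilon_{ijk}$ structure of \eqref{bogomolny}, the symmetric parts cancelling.

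The main obstacle is analytic rather than algebraic, and lives entirely at the poles $z=\pm1$. Each integration by parts produces boundary terms there, and the very convergence of the integrals defining $\Phi$ and $A_i$ depends on the decay of $\boldsymbol{v}$ as $z\to\pm1$. I would control both through Hitchin's analysis \cite{hitchin_83} of $\Delta^\dagger\boldsymbol{v}=0$: because the residues of the $T_j$ span an irreducible $su(2)$ representation, the indicial exponents of the solutions at $z=\pm1$ are fixed, the normalizable subspace is shown to be exactly two-dimensional (so that $\Phi$ and $A_i$ are genuine $2\times 2$ matrices), and every boundary contribution is forced to vanish. A final, independent step is the large-$r$ asymptotic analysis confirming that the resulting fields obey the stated behaviour $1-n/(2r)+O(r^{-2})$ and hence carry charge $n$; this too is dictated by the local data at $z=\pm1$. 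I expect this pole analysis to be the one genuinely delicate ingredient, everything else being forced by the quaternionic commutativity and the Pauli algebra.
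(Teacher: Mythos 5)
You cannot be compared against ``the paper's own proof'' here, because the paper does not prove Theorem \ref{ADHMN}: Section 2 explicitly \emph{recalls} it as the classical ADHMN result, citing Nahm \cite{nahm82c} and Hitchin \cite{hitchin_83}, and the only proof-like material it supplies is the indicial analysis at $z=\pm1$ (the eigenvalues $(n-1)/2$ and $-(n+1)/2$ of $\sum_j\sigma_j\otimes l_j$) used to count the normalizable solutions. What your proposal does is reconstruct, correctly in outline, the standard reciprocity proof from that literature, and the algebraic skeleton checks out: $\partial_{x_i}\Delta^\dagger=-\im\,\sigma_i\otimes 1_n$ and $[\Delta^\dagger,z]=\im$ indeed give $\Delta^\dagger\partial_i\boldsymbol{v}=\im(\sigma_i\otimes 1_n)\boldsymbol{v}$ and $\Delta^\dagger(z\boldsymbol{v})=\im\boldsymbol{v}$; the vanishing of the $\sigma_k$-coefficient of $\Delta^\dagger\Delta$ is exactly \eqref{fullnahm}, so $G=(\Delta^\dagger\Delta)^{-1}=1_2\otimes R^{-1}$ commutes with the $\sigma_j\otimes 1_n$; and inserting the completeness relation into $F_{ij}$ and $D_i\Phi$ collapses both to bilinears $\langle(\sigma_a\otimes 1_n)\boldsymbol{v},G(\sigma_b\otimes 1_n)\boldsymbol{v}\rangle$ whose antisymmetric parts reproduce \eqref{bogomolny}. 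Two points deserve sharper treatment than your sketch gives them. First, the relation $1=\boldsymbol{v}\boldsymbol{v}^\dagger+\Delta G\Delta^\dagger$ must be read as an identity of integral kernels, $\delta(z-z')1_{2n}=\sum_a\boldsymbol{v}_a(z)\boldsymbol{v}_a^\dagger(z')+(\Delta G\Delta^\dagger)(z,z')$, and its validity amounts to showing that the $L^2$-cokernel of $\Delta$ is \emph{exactly} the two-dimensional span of the normalizable solutions and that no boundary terms survive the integrations by parts at the singular endpoints; ``their ranges exhaust the space since $\Delta^\dagger\Delta$ is invertible'' elides the closed-range/Fredholm argument that is the real content of Hitchin's analysis, though you do correctly flag that the difficulty lives at $z=\pm1$. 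Second, anti-Hermiticity of $A_i$ and $\Phi$ only puts the fields in $u(2)$; it is the transposition symmetry $T_i(z)=T_i^T(-z)$ in \eqref{constraint} that eliminates the trace part and lands in $su(2)$, and the asymptotics $1-n/(2r)+O(r^{-2})$ identifying the charge as $n$ is a separate estimate, again Hitchin's. With those attributions made explicit, your outline is the correct and standard proof of this background theorem.
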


Our strategy will be to solve (\ref{deltadagger}) and determine those solutions that are normalizable.
In what follows we shall denote by $V=(\boldsymbol{v}_1,\ldots,\boldsymbol{v}_{2n})$ and  similarly $W=(\boldsymbol{w}_1,\ldots,\boldsymbol{w}_{2n})$ the $2n\times 2n$ fundamental matrices of solutions to $\Delta\sp\dagger \boldsymbol{v}=0$ and $\Delta \boldsymbol{w}=0$ respectively. Then 
$V$ can be chosen to be $(W^\dagger)^{-1}$.
 As we have already remarked these are not all normalizable.
According to the ADMHN theorem the Nahm data $T_j(z)$ expanded in the vicinity of the end point $z=1-\xi$ behaves as
\begin{equation*}
T_j(1-\xi)=-\im\frac{l_j}{\xi}+O(1),\quad j=1,2,3,
\end{equation*}
where (the Hermitian) $l_j$  define the irreducible $n$-dimensional representation of the $su(2)$ Lie algebra,
$ [l_j,l_k]=\imath\, \epsilon_{jkl}\, l_l $.
Then (\ref{deltadagger})  behaves in the vicinity of the pole as 
\begin{equation}
\left[\frac{\mathrm{d}}{\mathrm{d} \xi}-\frac{\sum_{j=1}^3 \sigma_j\otimes l_j }{\xi}\right]\boldsymbol{v}(\boldsymbol{x},1-\xi)=0.\label{approxdelta}
\end{equation}
One can show (see for example \cite{weinbyi06}) that $\sum_{j=1}^3  \sigma_j\otimes l_j$ has only two distinct eigenvalues, $\lambda_a=(n-1)/2$ with multiplicity $n+1$ and $\lambda_b=-(n+1)/2$
with multiplicity $n-1$
If $\boldsymbol{a}_i$ are eigenvectors associated with $\lambda_a$ ($i=1,\ldots,n+1$), and $\boldsymbol{b}_j$ eigenvectors associated with $\lambda_b$ ($j=1,\ldots,n-1$), then  (\ref{approxdelta}) has solutions
$\xi^{\lambda_a}\boldsymbol{a}_i$ and $\xi^{\lambda_b}\boldsymbol{b}_j$.
Therefore normalizable solutions must lie in the subspace with positive $\lambda_a=(n-1)/2$ and so
we require that $\boldsymbol{v}(\boldsymbol{x},1)$ is orthogonal to the subspace with eigenvalue $-(n+1)/2$, i.e.
\[ \lim_{z\to 1\sp-}\boldsymbol{v}(\boldsymbol{x},z)^T\cdot\boldsymbol{b}_j
=0,\quad j=1,\ldots,n-1.  \]
These $n-1$ conditions coming from the behaviour at $z=1$ thus yield a $n+1$ dimensional space of
solutions to $\Delta\sp\dagger \boldsymbol{v}=0$. A similar analysis at $z=-1$ again yields a further $n-1$ constraints
resulting in two normalisable solutions on the interval.
We define a projector $\mu$ onto this subspace; this is the $2n\times 2$ matrix such that
$$V\mu=(\boldsymbol{v}_1,\boldsymbol{v}_2)$$ where $\boldsymbol{v}_a$ ($a=1,2$) are the normalizable solutions (\ref{norm}).
That is
$$\int_{-1}^1dz\,\mu\sp{\dagger}V\sp{\dagger}V\mu=
\mu\sp{\dagger}\left(\int_{-1}^1dz\,V\sp{\dagger}V\right)\mu
=1_2.$$
Although $V=V(\boldsymbol{x},z)$ and $\boldsymbol{v}_a=\boldsymbol{v}_a(\boldsymbol{x},z)$ are both $z$-dependent we note that the matrix $\mu$ does not depend on $z$ for we have from
$$0=\Delta\sp\dagger (\boldsymbol{v}_1,\boldsymbol{v}_2)=\Delta\sp\dagger (V\mu)=(\Delta\sp\dagger V)\mu
+\im V\dfrac{d}{dz}\mu=\im V\dfrac{d}{dz}\mu$$
and the (generic) invertibility of $V(\boldsymbol{x},z)$  that $\dot\mu=0$ and hence that
$\mu=\mu(\boldsymbol{x})$. Thus to reconstruct the gauge and Higgs fields  we must construct the
projector $\mu$ that extracts from $V$ the two normalizable solutions.

At this stage the integrations in (\ref{ADHMN}) look intractable but work of Panagopoulos enables their evaluation. Define the Hermitian matrices
\begin{equation}\label{pandefs}
\mathcal{H}=-\sum_{j=1}^3 x_j \sigma_j\otimes 1_n,\qquad 
\mathcal{F}=\imath \sum_{j=1}^3 \sigma_j\otimes T_j, \qquad
\mathcal{Q}=\frac{1}{r^2} \mathcal{H}\mathcal{F}\mathcal{H}-\mathcal{F}.
\end{equation}
Then
\begin{proposition}[Panagopoulos \cite{panagopo83}]
 \label{panagopolous} 
\begin{align}\label{pannorm}
 \int \mathrm{d}z\, \boldsymbol{{v}}_a\sp\dagger \boldsymbol{{v}}_b
&= \boldsymbol{{v}}_a\sp\dagger \mathcal{Q}^{-1}
\boldsymbol{{v}}_b.\\
\int \mathrm{d}z\, z \boldsymbol{{v}}_a\sp\dagger \boldsymbol{{v}}_b
&=
\boldsymbol{{v}}_a\sp\dagger\mathcal{Q}^{-1} \left( z+\mathcal{H}\,\frac{x_{i}}{r^2}\frac{\partial}{\partial x_{i}}   \right)\boldsymbol{{v}}_b.
\label{panhiggs}\\
\int \boldsymbol{{v}}_a\sp\dagger\frac{\partial}{\partial
x_i}\boldsymbol{{v}}_b \mathrm{d}z
&=\boldsymbol{{v}}_a\sp\dagger\mathcal{Q}^{-1} \left[
\frac{\partial}{\partial x_i}+\mathcal{H}\frac{z}{r^2}\,
x_i+\mathcal{H}\frac{\imath}{r^2} \left(
\boldsymbol{x}\times\boldsymbol{\nabla}\right)_i \right]\boldsymbol{{v}}_b.\label{pangauge}
\end{align}
\end{proposition}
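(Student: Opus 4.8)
The plan is to read each identity not as a definite integral but as an \emph{antiderivative} statement: it suffices to show that the $z$-derivative of the claimed right-hand side reproduces the integrand, since the three displayed operators are then the sought primitives. Throughout I work in the static gauge $x_4=0$, $T_4=0$, in which the Weyl equation $\Delta^\dagger\boldsymbol v=0$ becomes the linear ODE
\begin{equation*}
\dot{\boldsymbol v}=-(\mathcal F+\mathcal H)\boldsymbol v .
\end{equation*}
The three facts I would record first are elementary consequences of the definitions (\ref{pandefs}): (i) both $\mathcal F$ and $\mathcal H$ are Hermitian (for $\mathcal F$ this uses $T_j^\dagger=-T_j$), so that $\dot{\boldsymbol v}_a^\dagger=-\boldsymbol v_a^\dagger(\mathcal F+\mathcal H)$; (ii) $\mathcal H$ is independent of $z$ and $\mathcal H^2=r^2\,1_{2n}$, from $\sigma_j\sigma_k+\sigma_k\sigma_j=2\delta_{jk}$; and (iii) the two ``quaternionic'' relations $\mathcal Q\mathcal H=[\mathcal H,\mathcal F]$ and $\{\mathcal Q,\mathcal H\}=0$, both of which drop out of $\mathcal H^2=r^2$ after substituting the definition of $\mathcal Q$.

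For the normalisation identity (\ref{pannorm}) I would differentiate $\boldsymbol v_a^\dagger\mathcal Q^{-1}\boldsymbol v_b$ using the ODE together with $\tfrac{d}{dz}\mathcal Q^{-1}=-\mathcal Q^{-1}\dot{\mathcal Q}\,\mathcal Q^{-1}$. Demanding that the result equal $\boldsymbol v_a^\dagger\boldsymbol v_b$ and multiplying on the left and right by $\mathcal Q$ reduces the claim to the single operator identity
\begin{equation*}
\dot{\mathcal Q}+\mathcal Q^2+\mathcal Q(\mathcal F+\mathcal H)+(\mathcal F+\mathcal H)\mathcal Q=0 .
\end{equation*}
Because this is an identity between $2n\times2n$ matrices that uses nothing about the particular solution, proving it at once establishes (\ref{pannorm}) for all $a,b$.

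The substantive input is Nahm's equation, which enters only through $\dot{\mathcal F}$. A short computation with $\sigma_j\sigma_k=\delta_{jk}+\im\epsilon_{jkl}\sigma_l$ gives $\mathcal F^2=-1_2\otimes\sum_j T_j^2-\im\sum_l\sigma_l\otimes\sum_{j,k}\epsilon_{jkl}T_jT_k$, and (\ref{fullnahm}) with $T_4=0$ identifies the last double sum with $\dot T_l$, so that
\begin{equation*}
\dot{\mathcal F}=-\mathcal F^2-1_2\otimes\textstyle\sum_j T_j^2 .
\end{equation*}
Feeding this into $\dot{\mathcal Q}=\tfrac{1}{r^2}\mathcal H\dot{\mathcal F}\mathcal H-\dot{\mathcal F}$ and using $\tfrac{1}{r^2}\mathcal H(1_2\otimes\sum_jT_j^2)\mathcal H=1_2\otimes\sum_jT_j^2$ (again from $\mathcal H^2=r^2$) causes the troublesome $\sum_jT_j^2$ term to cancel, leaving $\dot{\mathcal Q}=\mathcal F^2-\tfrac{1}{r^2}\mathcal H\mathcal F^2\mathcal H$. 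Expanding $\mathcal Q^2$ and the anticommutator $\{\mathcal Q,\mathcal F\}$ (with $\{\mathcal Q,\mathcal H\}=0$ from (iii)) and collecting the four monomial types $\mathcal F^2$, $\mathcal H\mathcal F^2\mathcal H$, $\mathcal H\mathcal F\mathcal H\mathcal F$ and $\mathcal F\mathcal H\mathcal F\mathcal H$, one finds every coefficient vanishes, which is precisely the operator identity of the previous paragraph.

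For the Higgs and gauge identities (\ref{panhiggs})--(\ref{pangauge}) I would use the same mechanism, supplemented by the auxiliary relation got by differentiating $\Delta^\dagger\boldsymbol v=0$ in $x_i$ (noting $\partial_{x_i}\mathcal F=0$, $\partial_{x_i}\mathcal H=-\sigma_i\otimes1_n$),
\begin{equation*}
\frac{d}{dz}\bigl(\partial_{x_i}\boldsymbol v\bigr)=(\sigma_i\otimes1_n)\boldsymbol v-(\mathcal F+\mathcal H)\,\partial_{x_i}\boldsymbol v .
\end{equation*}
Differentiating the proposed primitive for (\ref{panhiggs}), the explicit factor $z$ reproduces $z\,\boldsymbol v_a^\dagger\boldsymbol v_b$ via (\ref{pannorm}); the term generated by $(\sigma_i\otimes1_n)\boldsymbol v_b$ collapses to $-\boldsymbol v_a^\dagger\mathcal Q^{-1}\boldsymbol v_b$ once $\sum_ix_i\sigma_i\otimes1_n=-\mathcal H$ and $\mathcal H^2=r^2$ are used, exactly cancelling the leftover from differentiating $z$; and the remaining radial-derivative terms vanish after substituting $\dot{\mathcal Q}$ from the operator identity above and invoking $\mathcal Q\mathcal H=[\mathcal H,\mathcal F]$. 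The gauge identity (\ref{pangauge}) runs identically but with more terms: one sorts the $z$-derivative into pieces proportional to $\boldsymbol v_b$, to $\partial_{x_i}\boldsymbol v_b$ and to the angular combination $(\boldsymbol x\times\boldsymbol{\nabla})_i\boldsymbol v_b$, and shows each group either reconstructs $\boldsymbol v_a^\dagger\partial_{x_i}\boldsymbol v_b$ or cancels by the same two relations. I expect the only real obstacle to be the bookkeeping in this last step --- keeping the non-commuting insertions of $\mathcal H$, $\mathcal Q^{-1}$ and the scalar coefficients $x_i/r^2$ in the correct order while the angular piece is disentangled --- since the conceptual content is entirely carried by the operator identity above and by $\mathcal Q\mathcal H=[\mathcal H,\mathcal F]$. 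Finally one should record that $\mathcal Q^{-1}$ exists generically, which follows from the invertibility of $\Delta^\dagger\Delta$ noted earlier.
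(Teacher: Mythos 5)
Your strategy---read each formula as an antiderivative statement, differentiate the proposed primitive with the Weyl ODE, and reduce everything to operator identities---is exactly the paper's (Appendix A), where this is packaged as the operator $\mathcal{D}(\mathcal{B})=\dot{\mathcal{B}}-[T_4,\mathcal{B}]-(\mathcal{H}+\mathcal{F})\mathcal{B}-\mathcal{B}(\mathcal{H}+\mathcal{F})$. Your proof of (\ref{pannorm}) is correct and is Proposition \ref{panagopo1} in conjugated form: multiplying your identity $\dot{\mathcal{Q}}+\mathcal{Q}^2+\{\mathcal{Q},\mathcal{F}+\mathcal{H}\}=0$ on the left and right by $\mathcal{Q}^{-1}$ gives the paper's (\ref{panrel}), $\mathcal{D}(\mathcal{Q}^{-1})=1_{2n}$, and your route through $\dot{\mathcal{F}}=-\mathcal{F}^2-1_2\otimes\sum_jT_j^2$ is the same use of Nahm's equations. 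Your treatment of (\ref{panhiggs}), with the auxiliary relation from differentiating the Weyl equation in $x_i$, likewise reproduces Proposition \ref{panagopo2}. One restriction to flag: you work in the gauge $x_4=T_4=0$, while the proposition is invoked for solutions of (\ref{weylequ}) with Nahm data subject only to (\ref{constraint}); the paper proves the general case (one of the stated purposes of its appendix), the extra ingredients being the commutator $-[T_4,\mathcal{B}]$, the relation $[\mathcal{H},T_4]=0$ and the covariant form $\dot{\mathcal{F}}-[T_4,\mathcal{F}]=\im\sum\epsilon_{ijk}\sigma_k\otimes T_iT_j$ of (\ref{fullnahm}).

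The genuine gap is (\ref{pangauge}), which you do not prove. Your assertion that the remaining terms ``cancel by the same two relations'' ($\mathcal{Q}\mathcal{H}=[\mathcal{H},\mathcal{F}]$ and $\mathcal{H}^2=r^2$) is not correct: those relations, together with $\mathcal{D}(\mathcal{Q}^{-1})=1_{2n}$, dispose only of the first-order pieces, i.e.\ the terms still containing $\partial_{x_i}\boldsymbol{v}_b$ or $(\boldsymbol{x}\times\boldsymbol{\nabla})_i\boldsymbol{v}_b$. What survives is a purely zeroth-order matrix remainder of the form
\begin{equation*}
\boldsymbol{v}_a^\dagger\,\mathcal{Q}^{-1}\Bigl(-\frac{\partial\mathcal{H}}{\partial x_i}
+\mathcal{H}\,\frac{x_i}{r^2}
\mp\frac{\im}{r^2}\,\mathcal{H}\,\bigl[(\boldsymbol{x}\times\boldsymbol{\nabla})_i\mathcal{H}\bigr]\Bigr)\boldsymbol{v}_b ,
\end{equation*}
where the bracket denotes the derivative of $\mathcal{H}$ alone, and its vanishing is an \emph{independent} Pauli-algebra identity,
$r^2\,\partial_i\mathcal{H}=x_i\mathcal{H}+\im\,\mathcal{H}\bigl[(\boldsymbol{x}\times\boldsymbol{\nabla})_i\mathcal{H}\bigr]$,
which involves neither $\mathcal{Q}$ nor $\mathcal{F}$ and therefore cannot be a consequence of the two relations you cite. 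This is precisely the paper's closing ``standard relations'' step in Proposition \ref{panagopo3}---the longest computation in its appendix---and it is where the only delicate point of the whole proposition lives: the identity holds for exactly one relative sign of the angular term, so carrying out this step is what verifies (and is the only thing that can verify) the sign with which $(\boldsymbol{x}\times\boldsymbol{\nabla})_i$ enters the stated formula. Deferring it as ``bookkeeping'' leaves the third formula, which is the one actually needed for the gauge field (\ref{gaugeA}), unestablished.
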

These are proven in Appendix A.
Interestingly, consideration of gauge invariance leads to new results that will be particularly useful
in our later development.
Recall that a gauge transformation  acts on the normalizable solutions 
$(\boldsymbol{v}_1,\boldsymbol{v}_2)=V\mu$  from the right by $h(\boldsymbol{x})\in SU(2)$.
Then using (\ref{panhiggs})
\begin{align}
\Phi &=i \int_{-1}\sp{1} dz z\, \boldsymbol{v}\sp\dagger \boldsymbol{v}
= i \mu\sp\dagger\left[ \int_{-1}\sp{1} dz z V\sp\dagger V\right]\mu
=i \left[\mu\sp\dagger V\sp\dagger( \mathcal{Q}^{-1} \left( z+\mathcal{H}\,\frac{x_{i}}{r^2}\frac{\partial}{\partial x_{i}}\right)  )V\mu\right]_{z=-1}\sp{z=1}
\label{higgseval}
\end{align}
and so a gauge transformation yields
\begin{align}
h\sp{-1}\Phi h=
h\sp{-1}\left( i \int_{-1}\sp{1} dz z \boldsymbol{v}\sp\dagger \boldsymbol{v} \right) h &=
h\sp{-1}\left( i \mu\sp\dagger\left[ \int_{-1}\sp{1} dz z V\sp\dagger V\right]\mu\right) h
\nonumber
\\
&=h\sp{-1}\left( i \left[\mu\sp\dagger V\sp\dagger( \mathcal{Q}^{-1} \left( z+\mathcal{H}\,
\frac{x_{i}}{r^2}\frac{\partial}{\partial x_{i}}\right)  )V\mu\right]_{z=-1}\sp{z=1}\right) h
\nonumber
\\
&=h\sp{-1}\Phi h+h\sp{-1}
\left( i \left[\mu\sp\dagger V\sp\dagger\mathcal{Q}^{-1} \mathcal{H}V\mu\right]_{z=-1}\sp{z=1}\right) 
\frac{x_{i}}{r^2}\frac{\partial}{\partial x_{i}} h
\nonumber
\\
\intertext{thus we must have}
0&=
\left[\mu\sp\dagger V\sp\dagger\mathcal{Q}^{-1} \mathcal{H}V\mu\right]_{z=-1}\sp{z=1}.
\label{gaugeconsistency}
\end{align}
Further, the transformation of the gauge field
\begin{equation}
A_{i}=\int_{-1}^1\mathrm{d}z\, \boldsymbol{v}\sp\dagger
\frac{\partial}{\partial x_i} \boldsymbol{v}
=\mu\sp\dagger V\sp\dagger\left(
\mathcal{Q}^{-1} \left[
\frac{\partial}{\partial x_i}+\mathcal{H}\frac{z}{r^2}\,
x_i+\mathcal{H}\frac{\imath}{r^2} \left(
\boldsymbol{x}\times\boldsymbol{\nabla}\right)_i \right]
\right)V\mu\Big|_{z=-1}\sp{z=1}
\label{gaugeA}
\end{equation}
under a gauge transformation necessitates that we have 
\begin{align*}
h\sp{-1} A_i h +
h\sp{-1} \partial_i h &=
h\sp{-1}\left(\int_{-1}^1\mathrm{d}z\, \boldsymbol{v}\sp\dagger
\frac{\partial}{\partial x_i} \boldsymbol{v}\right) h=
h\sp{-1} A_i h +h\sp{-1} \left( \mu\sp\dagger V\sp\dagger \mathcal{Q}^{-1}V\mu\right)\Big|_{z=-1}\sp{z=1} 
\partial_i h \\
& \qquad +
h\sp{-1} \left( \mu\sp\dagger V\sp\dagger \mathcal{Q}^{-1}\mathcal{H}V\mu\right)\Big|_{z=-1}\sp{z=1} 
\frac{\imath}{r^2} \left(
\boldsymbol{x}\times\boldsymbol{\nabla}\right)_i h .
\end{align*}
This will follow again as a result of (\ref{gaugeconsistency}) and the requirement of the ADHMN theorem
\ref{ADHMN} that $V\mu$ is normalised by 
\begin{equation*}
1_2=\mu\sp{\dagger}\left(\int_{-1}^1dz\,V\sp{\dagger}V\right)\mu=
\mu\sp{\dagger}(\boldsymbol{x})
\left(V\sp{\dagger}(\boldsymbol{x},z)\mathcal{Q}\sp{-1}(\boldsymbol{x},z)V(\boldsymbol{x},z)\right)\big\vert_{z=-1}\sp{z=1}
\,\mu(\boldsymbol{x}).
\end{equation*}
The $z$-independence of the projectors $\mu$ together with the fact that the poles of $V$ lie only at the 
end points means that determining the projector is purely algebraic  and reduces to the  cancellation of poles.

Consideration of (\ref{gaugeconsistency}) leads to a new result
which is the analogue of Hitchin's hermitian form introduced in his description of monopoles \cite{hitchin_83}.
\begin{theorem}\label{constancythm} With the notation above,
\begin{equation}
\left(V\sp\dagger \mathcal{Q}^{-1}\mathcal{H} V\right)(z)=\textrm{constant},
\qquad
\left(W\sp\dagger \mathcal{Q}\mathcal{H} W\right)(z)=\textrm{constant},
\label{constancy}
\end{equation}
and consequently (\ref{gaugeconsistency}) holds true.
\end{theorem}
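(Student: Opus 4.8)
The plan is to prove each constancy statement by differentiating in $z$ and showing the derivative vanishes, reducing everything to two ingredients: the first-order flow obeyed by $V$, and a handful of algebraic identities for $\mathcal{H},\mathcal{F},\mathcal{Q}$. First I would rewrite the Weyl operator (\ref{weylequ}) using the Panagopoulos matrices (\ref{pandefs}). Since $-\sum_j\sigma_j\otimes T_j=\imath\mathcal{F}$ and $-\imath\sum_j x_j\sigma_j\otimes 1_n=\imath\mathcal{H}$, one has $\Delta^{\dagger}=\imath\,d/dz+x_4-\imath T_4+\imath(\mathcal{F}+\mathcal{H})$, so every column of $V$ obeys $\dot V=\mathcal{A}V$ with $\mathcal{A}:=\imath x_4+T_4-\mathcal{F}-\mathcal{H}$. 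The reality constraints (\ref{constraint}) make $\mathcal{F}$ and $\mathcal{H}$ Hermitian and $T_4$ anti-Hermitian, whence $\mathcal{A}^{\dagger}=-\imath x_4-T_4-\mathcal{F}-\mathcal{H}$ and $\dot V^{\dagger}=V^{\dagger}\mathcal{A}^{\dagger}$.

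Next I would record the purely algebraic facts coming from $\sigma_j\sigma_k=\delta_{jk}+\imath\epsilon_{jkl}\sigma_l$ applied to the definitions (\ref{pandefs}): namely $\mathcal{H}^2=r^2 1$, the anticommutation $\{\mathcal{H},\mathcal{Q}\}=0$ (equivalently $\mathcal{H}\mathcal{Q}\mathcal{H}=-r^2\mathcal{Q}$), the identity $\mathcal{Q}\mathcal{H}=[\mathcal{H},\mathcal{F}]$, and $[\mathcal{H},T_4]=0$. For instance $\mathcal{H}\mathcal{Q}=\mathcal{F}\mathcal{H}-\mathcal{H}\mathcal{F}$ and $\mathcal{Q}\mathcal{H}=\mathcal{H}\mathcal{F}-\mathcal{F}\mathcal{H}$ sum to zero. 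Combined with $[\mathcal{H},T_4]=0$ these give $[\mathcal{H},\mathcal{A}]=-[\mathcal{H},\mathcal{F}]=-\mathcal{Q}\mathcal{H}$, the single identity on which the final cancellation will turn.

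The dynamical input is a differential identity for $\mathcal{Q}$ itself. Reading Proposition \ref{panagopolous}, eq. (\ref{pannorm}), as the statement that $V^{\dagger}\mathcal{Q}^{-1}V$ is an antiderivative of $V^{\dagger}V$, i.e. $\tfrac{d}{dz}(V^{\dagger}\mathcal{Q}^{-1}V)=V^{\dagger}V$, and stripping off the generically invertible $V,V^{\dagger}$, I obtain the operator identity $\dot{\mathcal{Q}}=\mathcal{Q}\mathcal{A}^{\dagger}+\mathcal{A}\mathcal{Q}-\mathcal{Q}^2$ (this is where Nahm's equations (\ref{fullnahm}) enter, repackaged). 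I would then differentiate the target form, using $\dot V=\mathcal{A}V$ and $\tfrac{d}{dz}\mathcal{Q}^{-1}=-\mathcal{Q}^{-1}\dot{\mathcal{Q}}\mathcal{Q}^{-1}$, to get $\tfrac{d}{dz}(V^{\dagger}\mathcal{Q}^{-1}\mathcal{H}V)=V^{\dagger}[\mathcal{A}^{\dagger}\mathcal{Q}^{-1}\mathcal{H}-\mathcal{Q}^{-1}\dot{\mathcal{Q}}\mathcal{Q}^{-1}\mathcal{H}+\mathcal{Q}^{-1}\mathcal{H}\mathcal{A}]V$. Substituting the $\dot{\mathcal{Q}}$ identity collapses the bracket to $\mathcal{Q}^{-1}[\mathcal{H},\mathcal{A}]+\mathcal{H}$, and the algebraic identity $[\mathcal{H},\mathcal{A}]=-\mathcal{Q}\mathcal{H}$ makes this $-\mathcal{H}+\mathcal{H}=0$, proving the first constancy statement.

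For the second statement I would avoid a parallel computation and instead exploit $V=(W^{\dagger})^{-1}$ together with the anticommutation: since $(V^{\dagger}\mathcal{Q}^{-1}\mathcal{H}V)^{-1}=V^{-1}\mathcal{H}^{-1}\mathcal{Q}(V^{\dagger})^{-1}$ with $\mathcal{H}^{-1}=\mathcal{H}/r^2$ and $\mathcal{H}\mathcal{Q}=-\mathcal{Q}\mathcal{H}$, one finds $W^{\dagger}\mathcal{Q}\mathcal{H}W=-r^2(V^{\dagger}\mathcal{Q}^{-1}\mathcal{H}V)^{-1}$, which is constant because the first factor is and $r$ is $z$-independent. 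Finally (\ref{gaugeconsistency}) is immediate: constancy on $(-1,1)$ forces the limits as $z\to\pm1$ to coincide, so $[V^{\dagger}\mathcal{Q}^{-1}\mathcal{H}V]_{z=-1}^{z=1}=0$, and since $\mu=\mu(\boldsymbol{x})$ is $z$-independent it may be pulled outside the bracket. The main obstacle I anticipate is the dynamical identity for $\dot{\mathcal{Q}}$ and the exact cancellation it drives: every sign in $\mathcal{A},\mathcal{A}^{\dagger}$ and in $\{\mathcal{H},\mathcal{Q}\}=0$ must be correct for the residual $\mathcal{Q}^{-1}[\mathcal{H},\mathcal{A}]+\mathcal{H}$ to vanish, and one should check that ``constant on the open interval'' legitimately yields the endpoint identity despite the poles of $V$ at $z=\pm1$.
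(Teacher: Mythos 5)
Your proof is correct and follows essentially the same route as the paper: you differentiate $V^\dagger\mathcal{Q}^{-1}\mathcal{H}V$ in $z$ using the flow $\dot V=\mathcal{A}V$, and your dynamical identity $\dot{\mathcal{Q}}=\mathcal{Q}\mathcal{A}^\dagger+\mathcal{A}\mathcal{Q}-\mathcal{Q}^2$ is exactly the paper's key relation $\mathcal{D}(\mathcal{Q}^{-1})=1_{2n}$ of (\ref{panrel}) rewritten, after which the bracket collapses to $\mathcal{H}+\mathcal{Q}^{-1}\left[\mathcal{F},\mathcal{H}\right]=0$ by (\ref{qprops}), precisely as in the paper. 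Your explicit relation $W^\dagger\mathcal{Q}\mathcal{H}W=-r^2\left(V^\dagger\mathcal{Q}^{-1}\mathcal{H}V\right)^{-1}$ merely makes precise the paper's one-line remark that the two statements are essentially inverses of each other.
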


\begin{proof} As these are essentially the inverses of each other we shall only prove the first. Using
the notation and results of Appendix A we find
\begin{align*}
\frac{d}{dz}\left(V\sp\dagger \mathcal{Q}^{-1}\mathcal{H} V\right)&=
V\sp\dagger\left( \frac{d}{dz}\left[   \mathcal{Q}^{-1}\mathcal{H}\right] -(\mathcal{H}+\mathcal{F}) \mathcal{Q}^{-1}\mathcal{H}
-\mathcal{Q}^{-1}\mathcal{H}(\mathcal{H}+\mathcal{F})\right)V\\
&=
V\sp\dagger\left( \mathcal{D}( \mathcal{Q}^{-1})\mathcal{H}+\mathcal{Q}^{-1}(\mathcal{H}+\mathcal{F})\mathcal{H}
-\mathcal{Q}^{-1}\mathcal{H}(\mathcal{H}+\mathcal{F}) \right)V\\
&=
V\sp\dagger\left(\mathcal{H}+Q^{-1}[\mathcal{F},\mathcal{H}] \right)V\\
&=0
\end{align*}
where we have made explicit use of (\ref{panrel}) and (\ref{qprops}).
\end{proof}

Let us summarise what we have thus far. Given
\begin{itemize}
\item Nahm Data (needed to construct $\Delta\sp\dagger$ and $\mathcal{Q}$),
\item the fundamental matrices $V$ or $W$ to $\Delta\sp\dagger \boldsymbol{v}=0$
or $\Delta \boldsymbol{w}=0$, where $V=\left(W\sp\dagger\right)\sp{-1}$,
\end{itemize}
we can algebraically solve for the projector $\mu$ and consequently obtain the Higgs (\ref{higgs})
and gauge fields (\ref{gaugeA}). Although this is the strategy of our solution we shall ultimately show that we
only require this data up to a gauge transformation and that these are determined by the curve.

\section{Integrability and a lesser known Ansatz of Nahm}
In this section we shall express both the Nahm data and fundamental matrices $V$ and $W$ in terms of
the curve, clearly identifying what may be done explicitly and what is implicit. The key ingredient is
integrability which has not been visible so far. We first make some general remarks, then turn to
the construction of the Nahm data and fundamental matrices. 

Upon setting (with ${T_i}\sp\dagger=-T_i$, $T_4\sp\dagger =-T_4$)
\begin{equation*}
\alpha=T_4+\im T_3,\quad \beta = T_1+iT_2,
\quad
L=L(\zeta):=\beta -(\alpha+\alpha\sp\dagger)\zeta-\beta\sp\dagger \zeta^2, \quad
M=M(\zeta):=-\alpha-\beta\sp\dagger \zeta,
\end{equation*}
one finds
\begin{equation}
\begin{split}\label{integrability}
\dot{{T}_i} =[T_4,T_i]+\frac12\sum_{j,k=1}^3\epsilon_{ijk}[T_j(z),T_k(z)]
&\Longleftrightarrow
\dot L=[L,M]\\
&\Longleftrightarrow\quad
\left\{\begin{aligned}
\left[\dfrac{d }{dz}-\alpha,\beta\right]&=0,\\
\dfrac{d (\alpha+\alpha\sp\dagger)}{dz}&=[\alpha,\alpha\sp\dagger]+[\beta,\beta\sp\dagger].
\end{aligned}\right.
\end{split}
\end{equation}
The first equivalance here is that of a Lax pair, suggesting an underlying integrable system, while
the second equivalence expresses
Nahm's equations in the form of a complex and a real equation (respectively) \cite{donaldson84}. The complex Nahm
equation is readily solved,
\begin{equation}
\beta g = g \nu, \quad
\left( \frac{d}{dz}-\alpha\right)g=0 
\Longleftrightarrow
\beta = g\nu g\sp{-1} ,\quad  \alpha=\dot g g\sp{-1},
\end{equation}
where $\nu$ is constant and generically diagonal, $\nu=\diag(\nu_1,\ldots,\nu_n)$;  by 
conjugating\footnote{ $\tilde\beta =g(0)\sp{-1}\beta g(0)$,  $\tilde g(z) =g(0)\sp{-1} g(z)$,
$\tilde\alpha =g(0)\sp{-1}\alpha g(0)$.}  by
the constant matrix $g(0)$ we may assume $\beta(0)=\nu$ and $g(0)=1_n$.
It is the real equation that is more difficult. Define
\begin{equation}\label{defh}
h=g\sp\dagger g
\end{equation}
 then
\begin{equation}\label{defdhh}
\dot h h\sp{-1} =g\sp{\dagger }(\alpha+\alpha\sp\dagger)g\sp{\dagger \,-1}, \qquad h(0)=1_n,
\end{equation}
and the real equation yields the (possibly) nonabelian Toda equation
\begin{equation}\label{nonabtoda}
\frac{d}{dz} \left( \dot h h\sp{-1} \right)= \left[ h\nu h\sp{-1}, \nu\sp\dagger\right].
\end{equation}
In the monopole context Donaldson \cite{donaldson84} proved the existence of a solution for the real equation with
Nahm data. 
In terms of the Lax pair we are wishing to solve
\begin{align}
(L-\eta)U&=0,
\nonumber \\
\left[  \frac{d}{dz}+M\right]U&=0.\label{mscatter}
\end{align}
The characteristic equation $P(\eta,\zeta):=\det(\eta-L(\zeta))=0$ defines our spectral curve $\mathcal{C}$
which takes the form
\begin{equation}
P(\zeta,\eta):=\eta^n+a_1(\zeta)\eta^{n-1}+\ldots+a_n(\zeta)=0, \quad
\mathrm{deg}\, a_k(\zeta) \leq 2k\label{curve1}.
\end{equation}
For large $\zeta$ we see that 
$\det(\eta/\zeta^2-L/\zeta^2)\sim \prod_{i=1}\sp{n}(\eta/\zeta^2+\nu_i\sp\dagger)$ and so 
$\eta/\zeta\sim -\nu_i\sp\dagger\zeta$.
The curve $\mathcal{C}$ is an $n$-sheeted cover of $\mathbb{P}\sp1$
of  genus $g_\mathcal{C} =(n-1)^2$; we shall denote by $\{\infty_i\}_{i=1}\sp{n}$ the preimages of $\zeta=\infty$.
Setting $U=g\sp{\dagger\,-1}\Phi$ 
we  use the complex equation to transform (\ref{mscatter}) into a standard scattering equation
for $\Phi$,
\begin{equation}\label{standardscattering}
 \left[  \frac{d}{dz}-g\sp{\dagger }(\alpha+\alpha\sp\dagger)g\sp{\dagger \,-1}\right]\Phi=
\zeta \nu\sp\dagger\Phi.
\end{equation}
\lq\lq Standard\rq\rq here simply means that the matrix  $\zeta \nu\sp\dagger$
on the right-hand side is $z$-independent.
In terms of $h$ we have (\ref{defdhh}) and
\begin{align}
g\sp{\dagger}Lg\sp{\dagger\,-1}=h\nu h\sp{-1}-\dot h h\sp{-1}\zeta-\nu\sp\dagger\zeta^2.
\label{conjL}
\end{align}
The point to note is that we can solve the standard scattering equation
(\ref{standardscattering}) explicitly in terms of the function theory of $\mathcal{C}$ by what is known
as a Baker-Akhiezer function  \cite{krichever977a}, and so too $\dot h h\sp{-1}$ and  (the gauge transform)
$g\sp{\dagger}Lg\sp{\dagger\,-1}=\widehat\Phi\diag(\eta_1\ldots,\eta_n)\widehat\Phi\sp{-1}$, where
$\widehat\Phi:=\widehat\Phi(z,P)$ ($P\in \mathcal{C}$) is the fundamental matrix of solutions to (\ref{standardscattering}). Asymptotically the
$i$-th column of $\widehat\Phi$ behaves as $\exp(z \zeta \nu_i\sp\dagger)$ and the Baker-Akhiezer function
is defined by 
\begin{equation}\label{BAnorm}
\lim_{ P=P(\zeta,\eta)\rightarrow \infty_i  }\widehat\Phi(z, P)\exp(-z \zeta \nu_j\sp\dagger)=\diag(\delta_{ij}).
\end{equation}

In the monopole context Ercolani and Sinha  were the first to construct the Baker-Akhiezer function and
use this to study Nahm data \cite{ercolani_sinha_89}; this theory was (corrected and) extended in 
\cite{Braden2010d} with a needed generalization of the Abel-Jacobi map proved in \cite{Braden2008}. In the notation above these works explicitly construct the 
Baker-Akhiezer solutions to (\ref{standardscattering}) as well\footnote{In the
notation of Ercolani and Sinha we have $g\sp{\dagger\,-1 }=C$ and $\dot h h\sp{-1}=- Q_0(z)$.}
 as $\dot h h\sp{-1}$. Thus the Nahm data is only determined up to the gauge transformation
 $g\sp{\dagger\,-1}$ which is to satisfy the differential equation  (\ref{defdhh})  for $g$, where
 the left-hand side is to be viewed as specified. In the gauge where $T_4=0$, equivalently
$\alpha=\alpha\sp\dagger$, this takes the form
$$2 \dot g\sp\dagger  g\sp{\dagger\, -1}= \dot h h\sp{-1}$$
which we cannot solve in general. (In the Hitchin system context this is the analogue of
having solved the complex moment map equation ${\bar\partial}_A\phi=0$ and then being faced with
solving the real moment map equation $\mu(A)=F_A+[\phi,\phi\sp\dagger]=0$.)
In what follows we will not need the specific construction of these functions and will simply denote the solution to (\ref{standardscattering})  by $\Phi_{BA}$.

Although we can only solve for Nahm data up to a gauge transformation, what can we say about the solutions to either $\Delta\sp\dagger \boldsymbol{v}=0$ or $\Delta \boldsymbol{w}=0$? Nahm again
made the seminal ansatz; with some small changes better suited for the connections to integrability
we may encode this in terms of the following theorem (proven in Appendix B).
\begin{theorem}[Nahm \cite{nahm82c}]
Let $|s>$ be an arbitrarily normalized spinor not in $\ker
(1_2+\boldsymbol{\widehat{ u}}(\boldsymbol{x})\cdot\boldsymbol{\sigma})$,
with $\boldsymbol{\widehat{ u}}(\boldsymbol{x})$  a unit vector independent of $z$ and
$\boldsymbol{\sigma}=(\sigma_1,\sigma_2,\sigma_3)$. Then
\begin{equation}\label{Nansatz}
\boldsymbol{w}:=\boldsymbol{w}(\zeta)
=(1_2+\boldsymbol{\widehat{ u}}(\boldsymbol{x})\cdot\boldsymbol{\sigma})\,
e\sp{-\im z\left[(x_1-\im x_2)\zeta-\im x_3 -x_4\right]}|s>\otimes\,
U(z) 
\end{equation}
satisfies $\Delta \boldsymbol{w}=0$ if and only if
\begin{align}
0&=\left(L(\zeta)-\eta\right) U(z) ,\label{eqlax1}\\
0&=\left( \dfrac{\mathrm{d}}{\mathrm{d}z}+M(\zeta)
\right) U(z),\label{eqlax2} 
\end{align}
where
\begin{equation}
\eta=(x_2-\im x_1)-2 x_3\zeta-(x_2+\im x_1)\zeta^2,
\label{etadef}
\end{equation}
and $L(\zeta)$ and $M(\zeta)$, as above, satisfy the Lax equation $\dot L=[L,M]$.
\end{theorem}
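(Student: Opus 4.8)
The plan is to establish the equivalence by a direct substitution that exploits the $2\times 2$ block structure of $\Delta$ in the spinor (Pauli) factor. Writing $\tau_j:=T_j+\im x_j 1_n$, the operator (\ref{defdelta}) is the array of $n\times n$ blocks
\[
\Delta=\begin{pmatrix}\im\frac{d}{dz}+x_4-\im T_4+\tau_3 & \tau_1-\im\tau_2\\ \tau_1+\im\tau_2 & \im\frac{d}{dz}+x_4-\im T_4-\tau_3\end{pmatrix}.
\]
I would insert the ansatz $\boldsymbol{w}=|\chi>\otimes\, e^{-\im z c}\,U(z)$ with $|\chi>=(1_2+\boldsymbol{\widehat u}\cdot\boldsymbol{\sigma})|s>$ and $c=(x_1-\im x_2)\zeta-\im x_3-x_4$. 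Because $|\chi>$ is $z$-independent and the exponential is scalar, one has $\im\frac{d}{dz}(e^{-\im z c}U)=e^{-\im z c}(cU+\im\dot U)$, so after dividing out the common exponential each block row of $\Delta\boldsymbol{w}=0$ becomes an equation in $U$ and $\dot U$ alone, with coefficients $\chi_1,\chi_2$.

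The geometric input is the eigenspinor property: since $(\boldsymbol{\widehat u}\cdot\boldsymbol{\sigma})^2=1_2$ and $|s>\notin\ker(1_2+\boldsymbol{\widehat u}\cdot\boldsymbol{\sigma})$, the vector $|\chi>$ is a nonzero $(+1)$-eigenvector, $(\boldsymbol{\widehat u}\cdot\boldsymbol{\sigma})|\chi>=|\chi>$. I would then take $\boldsymbol{\widehat u}$ to be precisely the real unit direction whose positive eigenspinor is $|\chi>\propto(1,\im\zeta)^T$; this single identification is what ties the projector to the spectral parameter $\zeta$ entering $L(\zeta)$, $M(\zeta)$ and $\eta$. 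Setting $\chi_1=1$, $\chi_2=\im\zeta$ then makes the two block rows explicit.

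I would treat the top row first, separating the scalar ($x$-dependent) from the operator ($T$-dependent) contributions. The scalar part collapses to $\zeta\big[(x_1-\im x_2)+\im(\im x_1+x_2)\big]U$, which vanishes identically; this is exactly the purpose of the chosen exponent $c$. The surviving operator part is $\im\big(\dot U-\alpha U+\zeta(T_1-\im T_2)U\big)=\im(\dot U+MU)$, so the top row is equivalent to (\ref{eqlax2}) on its own. For the bottom row I would substitute $\dot U=-MU$ from (\ref{eqlax2}); the operator terms then reassemble, using $-\zeta\alpha+\zeta T_4-\im\zeta T_3=-2\im\zeta T_3$, into $LU$, while the position-dependent terms combine into $-\eta U$ with $\eta$ exactly (\ref{etadef}). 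Hence, given (\ref{eqlax2}), the bottom row is equivalent to $(L-\eta)U=0$, i.e. (\ref{eqlax1}). Running these implications in both directions yields the stated equivalence.

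I expect no genuine obstacle beyond careful bookkeeping of the block structure; the only real content is recognising that the eigenspinor must be proportional to $(1,\im\zeta)^T$ and that $c$ is engineered so that, row by row, the position-dependent terms either cancel (top row) or reassemble into the quadratic $\eta(\zeta)$ of (\ref{etadef}) (bottom row). Note that the Lax relation $\dot L=[L,M]$ is not needed for this pointwise equivalence: it enters only as the compatibility condition guaranteeing that the overdetermined pair (\ref{eqlax1})--(\ref{eqlax2}) admits solutions, which is a separate matter.
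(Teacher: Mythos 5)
Your proof is correct, but it is organised quite differently from the paper's own argument (Appendix B), so a comparison is worth recording. You fix the spinor basis at the outset: you identify $(1,\im\zeta)^T$ as the $+1$ eigenspinor of $\boldsymbol{\widehat u}(\zeta)\cdot\boldsymbol{\sigma}$ --- which indeed matches the paper's parametrization (\ref{defhu}) --- and then run a two-row block computation in which the top row is exactly (\ref{eqlax2}) (the position-dependent terms cancelling against the chosen exponent), and the bottom row, modulo the top one, reassembles into $(L-\eta)U=0$ with $L=\beta-2\im\zeta T_3-\zeta^2\beta\sp\dagger$ and $\eta$ as in (\ref{etadef}); I have checked this algebra and it is sound. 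The paper instead keeps $\boldsymbol{\widehat u}$ general and covariant: it expands $\Delta\boldsymbol{w}$ along $|s>$ and $\sigma_k|s>$ to obtain the system (\ref{nahm1}), (\ref{nahm2a}), verifies its consistency through the commutators $[\mathcal{L}_j,\mathcal{L}_k]$ (which is where Nahm's equations and $|\boldsymbol{\widehat u}|=1$ enter), and only then ties $\boldsymbol{\widehat u}$ to $\zeta$ via the null vector $\boldsymbol{y}(\zeta)$ of (\ref{defhy}), $\boldsymbol{\widehat u}=\im\,\boldsymbol{y}\times\overline{\boldsymbol{y}}/(\boldsymbol{y}\cdot\overline{\boldsymbol{y}})$, using the orthogonal frame $\Re\boldsymbol{y},\Im\boldsymbol{y},\boldsymbol{\widehat u}$ to collapse the three $\sigma_k$-equations to the single algebraic condition $(\boldsymbol{y}\cdot\boldsymbol{R})\boldsymbol{\widehat w}=0$, i.e.\ $(L-\eta)U=0$ with $\eta=2\boldsymbol{y}\cdot\boldsymbol{x}$. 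Your route is more elementary and is in one respect tighter on the ``only if'' direction: $\Delta\boldsymbol{w}=0$ consists of precisely your two $n$-component rows, so the two-way equivalence is immediate, whereas the paper's four-equation system is a priori only sufficient (the spinors $|s>,\sigma_k|s>$ are linearly dependent in $\mathbb{C}^2$) and its necessity is only recovered through the subsequent frame decomposition. What the paper's heavier, basis-free machinery buys is its byproducts: the commutator computation exhibits Nahm's equations as the integrability condition of the overdetermined pair (\ref{eqlax1})--(\ref{eqlax2}) --- the point you correctly flag as logically separate from the pointwise equivalence --- and the identity $\overline{\boldsymbol{y}(\zeta)}=-\overline{\zeta}\sp2\,\boldsymbol{y}(-1/\overline{\zeta})$ immediately yields the real structure (\ref{invol}) of the spectral curve, which the paper needs later and which your explicit eigenspinor computation does not expose.
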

The appendix also shows that the curve has a real structure, that is, it is invariant under the
anti-holomorphic involution:
\begin{equation}\label{invol}
\mathfrak{J}:\, (\zeta,\eta)\rightarrow
(-\frac1{\overline\zeta},-\frac{\overline\eta}{{\overline\zeta}\sp2}).
\end{equation}

Given our previous discussion our approach is natural: we may solve 
\begin{equation}\label{UBA}
U(z) =g\sp{\dagger\,-1}\,\Phi_{BA}
\end{equation}
in terms of the earlier (and unknown) gauge transformation $g\sp{\dagger\,-1}$ and the Baker-Akhiezer function. Prior to this
other workers had sought to explicitly perform the integrations. (This approach is reviewed in Appendix C.) 

It remains to describe the fundamental matrix $W$. 
Given a spectral curve 
and a position $\boldsymbol{x}$ we substitute the corresponding value of $\eta$ (given by
(\ref{etadef})) into (\ref{curve1}). This is an equation of degree $2n$ in $\zeta$ which we shall refer to as the \emph{Atiyah-Ward constraint}, this equation having appeared in their work. The $2n$ solutions $\zeta_j$
 give us $2n$ points on the curve $P_j:=(\zeta_j,\eta_j)$ ($j=1,\ldots,n$) where 
 $\eta_j$ again follows from $\boldsymbol{x}$  and (\ref{etadef}). These $2n$ points come in
 $n$ pairs of points related by the antiholomorphic involution $\mathfrak{J}$. To each point we have the associated values $\boldsymbol{\widehat{u}}(\zeta_j)$ and for each of these we solve for  $U(z)$ yielding a $2n\times 1$ matrix $\boldsymbol{w}(P_j)$. Taking each of the $2n$ solutions we obtain a $2n\times 2n$ matrix of solutions $W$. There may be non-generic points for which $\zeta_i=\zeta_j$ at which we
modify this discussion by taking a derivative $\boldsymbol{w}'(P_j)$;  these non-generic points correspond to points of bitangency of the spectral curve and appear in Hurtubise's study of the asymptotic behavious of 
the Higgs fileld \cite{hurtubise85a}.

Summarising the insights of this section: the Nahm data, the fundamental matrices $V$ or $W$ to $\Delta\sp\dagger \boldsymbol{v}=0$
or $\Delta \boldsymbol{w}=0$, where $V=\left(W\sp\dagger\right)\sp{-1}$,  and the Lax matrices
$L$, $M$ can be solved for in terms of the Baker-Akhiezer function \emph{up to
the  unknown gauge transformation $g\sp{\dagger\,-1}$}. In terms of this gauge transformation $h=g\sp{\dagger}g$
satisfies the nonabelian Toda equation (\ref{nonabtoda}).

\section{Constructing the gauge and Higgs Fields}

Having reviewed the general formalism and established the first new result Theorem \ref{constancythm}
we shall now extend this. Our aim is to establish
\begin{theorem}\label{reconstructionthm}
 Given a spectral curve $\mathcal{C}$ one may construct the gauge and Higgs fields
directly.
\end{theorem}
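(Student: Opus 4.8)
The plan is to assemble the ingredients of the preceding sections into an explicit, curve-determined recipe and then to show that the one genuinely unknown object, the gauge transformation $g\sp{\dagger\,-1}$, drops out of the field formulae. First I would record what the spectral curve supplies: from $\mathcal{C}$ one constructs the Baker-Akhiezer function $\Phi_{BA}$ solving (\ref{standardscattering}) together with $\dot h h\sp{-1}$, following \cite{ercolani_sinha_89,Braden2010d}. Evaluating the Atiyah-Ward constraint at a position $\boldsymbol{x}$ produces the $2n$ points $P_j=(\zeta_j,\eta_j)$ on $\mathcal{C}$, and Nahm's ansatz then gives, for each $P_j$, a column $\boldsymbol{w}(P_j)=\chi_j\otimes U(z,P_j)$ in which the spinorial prefactor $\chi_j$ is explicit in $\boldsymbol{x}$ while $U(z,P_j)=g\sp{\dagger\,-1}(z)\,\Phi_{BA}(z,P_j)$ by (\ref{UBA}). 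Assembling these gives $W$, hence $V=(W\sp\dagger)\sp{-1}$. The crucial structural observation is that the unknown gauge factor acts only on the $n$-dimensional Nahm index, so that
\[
V=(1_2\otimes g)\,\widetilde V,
\]
where $\widetilde V$ is built entirely from $\Phi_{BA}$ and the explicit spinors and is therefore curve-determined; moreover the Nahm data, and with it $g=g(z)$, is independent of $\boldsymbol{x}$, all positional dependence residing in $\widetilde V$ through the points $P_j(\boldsymbol{x})$ and the exponential and spinor factors.

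Next I would feed this factorisation into the field formulae (\ref{higgseval}) and (\ref{gaugeA}), which Proposition \ref{panagopolous} and Theorem \ref{constancythm} have already reduced to boundary evaluations at $z=\pm1$ of the bilinears $\mu\sp\dagger V\sp\dagger\,\mathcal{Q}\sp{-1}(\cdots)\,V\mu$. Writing $G=1_2\otimes g$ and inserting $V=G\widetilde V$, I would use that $\mathcal{H}$ is fixed by conjugation by $G$ and that the remaining $g$-dependence is confined to the congruence $G\sp{-1}\mathcal{Q}\,G\sp{\dagger\,-1}$. Computing this congruence on $\mathcal{F}=\im\sum_j\sigma_j\otimes T_j$, using $\beta=g\nu g\sp{-1}$ and (in the $T_4=0$ gauge) $g\sp\dagger\alpha g\sp{\dagger\,-1}=\tfrac12\dot h h\sp{-1}$, turns every entry into $\nu h\sp{-1}$, $h\sp{-1}\nu\sp\dagger$ and $h\sp{-1}\dot h h\sp{-1}$; that is, $g$ enters only through $h=g\sp\dagger g$. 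Since $h$ is itself curve-determined — for instance $h(z)=\Phi_{BA}(z,\zeta{=}0)\,\Phi_{BA}(0,0)\sp{-1}$ from the $\zeta=0$ slice of (\ref{standardscattering}) — the bare gauge transformation cancels and the bilinears are expressed purely through $\widetilde V$, $\mathcal{H}$ and curve data. Because $g$ carries no $\boldsymbol{x}$-dependence, the spatial derivatives in (\ref{gaugeA}) act only on $\widetilde V$ and generate no further unknown terms.

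It then remains to fix the projector $\mu$. As already argued $\dot\mu=0$, and the $2n\times2$ matrix $\mu$ is determined algebraically by demanding that $V\mu$ has normalisable ($\xi^{\lambda_a}$) behaviour at both endpoints, i.e. by cancellation of the poles dictated by the irreducible $su(2)$ representation at $z=\pm1$; this datum is curve-determined through $n$, up to the residual right $SU(2)$ action $\mu\mapsto\mu\,h(\boldsymbol{x})$, which is precisely the physical gauge freedom. Combining this with the previous paragraph yields the Higgs and gauge fields directly from $\mathcal{C}$, and I would close by checking that the construction reproduces the known charge-$2$ result.

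I expect the main obstacle to be the endpoint analysis underlying the cancellation. Near $z=\pm1$ the Nahm data, and hence $g$, is singular, so the clean factorisation $V=(1_2\otimes g)\widetilde V$ and the congruence identities for $\mathcal{Q}$ must be controlled in the presence of the poles. One has to verify that the limits $z\to\pm1$ of the bilinears exist, that the pole-cancellation defining $\mu$ is consistent with the eigenvalue splitting $\lambda_a=(n-1)/2$, $\lambda_b=-(n+1)/2$, and that the surviving boundary data are genuinely fixed by the Baker-Akhiezer function rather than by the unintegrable ODE for $g$.
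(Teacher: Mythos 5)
Your proposal is correct, and its skeleton --- the factorisation $V=(1_2\otimes g)\,\widehat V$ with the hatted factor curve-determined, the collapse of all $g$-dependence inside the Panagopoulos bilinears to the single unknown $h=g\sp\dagger g$, and the purely algebraic determination of the projector $\mu$ by pole cancellation at $z=\pm1$ --- is precisely the paper's. Where you genuinely depart from the paper is at the crux, the determination of $h(z)$ from the curve. The paper devotes its Section 5 to this: it extends the Lax pair to a bundle over $T\PP\sp1$, uses the reality conditions (\ref{bundlereality}) to show that $\mathcal{D}\sp{S}(\zeta\sp{S})=U\sp{N}(z,-1/\bar\zeta)\sp\dagger U\sp{S}(z,\zeta\sp{S})$ is $z$-independent and diagonal, and only then, letting $\zeta\to\infty$ and invoking the normalisation (\ref{BAnorm}), arrives at Theorem \ref{htheorem}, $h(z)=\widehat\Phi(z,0)\,\widehat\Phi(0,0)\sp{-1}$. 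You obtain the same formula by a one-line ODE-uniqueness argument: at $\zeta=0$ the standard scattering equation (\ref{standardscattering}) reads $\dot\Phi=\dot h\,h\sp{-1}\Phi$, whose fundamental solution normalised at $z=0$ is $h(z)$ itself, since $h(0)=1_n$ and $\dot h=(\dot h h\sp{-1})h$; hence $\widehat\Phi(z,0)=h(z)\,\widehat\Phi(0,0)$, provided only that $\widehat\Phi(0,0)$ is invertible (generically true, the $\nu_i$ being distinct and $\zeta=0$ not a branch point). This shortcut is valid and markedly more elementary. What the paper's longer route buys, and yours does not supply, is the surrounding structure: manifest hermiticity of the resulting expression, its independence of the implicit sheet ordering, and the section-theoretic dictionary ($l=n-1$ for Hitchin's sections versus $l=0$ for the Baker-Akhiezer ones) connecting the construction to Hitchin's picture --- none of which is strictly needed for Theorem \ref{reconstructionthm} itself. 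Two further inessential differences: your congruence $G\sp{-1}\mathcal{Q}G\sp{\dagger\,-1}$ produces entries $\nu h\sp{-1}$, $h\sp{-1}\nu\sp\dagger$, $h\sp{-1}\dot h h\sp{-1}$, where the paper's $\mathcal{Q}'=(1_2\otimes g\sp\dagger)\mathcal{Q}(1_2\otimes g\sp{\dagger\,-1})$ gives $h\nu h\sp{-1}$, $\nu\sp\dagger$, $\dot h h\sp{-1}$ --- both versions make the essential point that $g$ survives only through $h$; and the endpoint-singularity worry you flag is exactly what the paper disposes of by noting that $\widehat V$, $\widehat W$ are governed by the Baker-Akhiezer expansions on the whole interval $z\in[-1,1]$, so the pole structure at $z=\pm1$ entering the definition of $\mu$ is itself curve-determined.
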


First, the form of (\ref{UBA}) and (\ref{Nansatz}) tells us that we may write
\[
W=1_2\otimes g\sp{\dagger\, -1} \,\widehat W,
\qquad 
V=1_2\otimes g \,\widehat V,
\qquad
\widehat V= \widehat W\sp{\dagger\, -1},
\]
where
\[
\boldsymbol{\widehat w}=(1_2+\boldsymbol{\widehat{ u}}(\boldsymbol{x})\cdot\boldsymbol{\sigma})\,
e\sp{-\im z\left[(x_1-\im x_2)\zeta-\im x_3 -x_4\right]}|s>\otimes\,
\Phi_{BA }.
\]
Here $\widehat W=({\boldsymbol{\widehat w}}_1,\ldots,\boldsymbol{\widehat w}_{2n})$ and $\widehat V$ are explicitly expressible in terms of the spectral curve $\mathcal{C}$.
Next observe that the operators appearing on the right-hand side of Proposition \ref{panagopolous}
take the form $\mathcal{Q}\sp{-1}\mathcal{O}$ and, because $g=g(z)$ is a function of $z$ only, these
indefinite integrals may be written as
\[
\mu\sp\dagger V\sp\dagger \mathcal{Q}^{-1} \mathcal{O}V\mu
=
\mu\sp\dagger  {\widehat V}\sp\dagger\left[ \left(1_2\otimes g\sp\dagger\right) \mathcal{Q}^{-1} \left(1_2\otimes g
\right) \right] \mathcal{O}{\widehat V}\mu
=
\mu\sp\dagger  {\widehat V}\sp\dagger \mathcal{Q}'\sp{-1}
\left(1_2\otimes h\right) \mathcal{O}{\widehat V}\mu
,
\]
where we recall that $h=g\sp\dagger g$. Here, using the definitions (\ref{pandefs}),
\[
\mathcal{Q}'=
 \left(1_2\otimes g\sp{\dagger}\right) \mathcal{Q} \left(1_2\otimes g\sp{\dagger\,-1}\right) 
 :=
\frac{1}{r^2} \mathcal{H}\mathcal{F}'\mathcal{H}-\mathcal{F}',
\qquad
 \mathcal{F}'=\imath \sum_{j=1}^3 \sigma_j\otimes  g\sp{\dagger} T_j g\sp{\dagger\,-1}
.
\]
Now using the definition of  $\alpha$, $\beta$ we have\footnote{$\sum_j\sigma_j\otimes T_j=
\frac12 (\sigma_1 +\imath \sigma_2)\otimes (T_1 -\imath T_2)+
\frac12 (\sigma_1 -\imath \sigma_2)\otimes (T_1 +\imath T_2)+\sigma_3\otimes T_3
= \sigma_-\otimes \beta+\sigma_3\otimes T_3-\sigma_+\otimes \beta\sp\dagger
$.
}
 from (\ref{conjL})  that
\[
 \mathcal{F}'=\begin{pmatrix} \frac12 \dot h h\sp{-1} & -\imath \nu\sp\dagger\\
 \imath h\nu h\sp{-1} &-\frac12 \dot h h\sp{-1}
 \end{pmatrix},
\]
and from our earlier remarks this may be reconstructed from the curve $\mathcal{C}$. 
Further $\widehat W$ and $\widehat V$ are determined by the Baker-Akhiezer function for the full range of $z\in[-1,1]$, so allowing their expansions at the end-points and the evaluation of the integrals of
Proposition \ref{panagopolous}.
At this stage we
see that the various impediments to reconstructing the gauge and Higgs  fields, notably the need for the Nahm
data to determine $\mathcal{Q}$ and only being able to determine $V$ up to a gauge transformation,
have combined into the one unkown (matrix) $h(z)$. Certainly the Baker-Akiezer function gives us
$\dot h h\sp{-1}$
from which one can in principle solve the ODE for $h(z)$, but our aim is to reconstruct the gauge and Higgs fields without having to do this integration. We shall now show how we may reconstruct
$h(z)$ and so prove the theorem.

\section{Determining $h(z)$}

To proceed we must be a little more precise about our spectral curve, the
nature of the coordinates $(\zeta,\eta)$ and the geometry of the Lax pair. 
In particular we need to specify the sections we are using in our construction.
Hitchin's construction \cite{hitchin_83} shows that the spectral curve
naturally lies in mini-twistor space: the spectral curve is an algebraic curve $\mathcal{C} \subset T\PP\sp1$.
If $\zeta$ is the inhomogeneous coordinate on the Riemann sphere then $(\zeta,\eta)$ are the standard local coordinates on $T\PP\sp1$ defined by $(\zeta,\eta)\rightarrow\eta\frac{d}{d\zeta}$. The anti-holomorphic involution (\ref{invol}) endows $T\PP\sp1$ with its standard real structure.

\subsection{Bundle Structure and Sections}
We wish to extend the definition of our Lax pair to yield a bundle on $\mathbb{P}\sp1$. Let $N$ and $S$
denote the standard patches on $\mathbb{P}\sp1$ (here $0\in N=\{\zeta\,|\, \zeta\ne\infty\} $) and denote
by the same expressions the corresponding open sets for $T\PP\sp1$. 
Take $L=L\sp{N}$, $M=M\sp{N}$, and we choose ($\zeta\sp{S}=1/\zeta$, $\eta\sp{S}=\eta/\zeta^2$)
\begin{equation}\label{bundleLax}
L\sp{S}(\zeta\sp{S})=\frac{L\sp{N}(\zeta)}{\zeta^2},\qquad 
M\sp{S}(\zeta\sp{S})= \alpha\sp\dagger-\frac{\beta}{\zeta}=M\sp{N}(\zeta)-\frac{L\sp{N}(\zeta)}{\zeta}.
\end{equation}
We have the reality conditions
\begin{equation}\label{bundlereality}
L\sp{S\, \dagger}(\zeta\sp{S})=-L\sp{N}(-1/\bar{\zeta}),
\qquad
M\sp{S\, \dagger}(\zeta\sp{S})=-M\sp{N}(-1/\bar{\zeta}).
\end{equation}
Away from the branch points of the spectral curve the eigenvectors\footnote{We will use
capital Roman letters $\{I,J,\ldots\}$ for rows and lower case Roman letters $\{i,j,\ldots\}$ for columns.}  $U^N_j$ are linearly independent and can be organized into a maximal rank matrix $U^N=(U^N_1, U^N_2, \ldots, U^N_n)$ so that
$L^N=U^N D^N\left(U^N\right)^{-1}$, with diagonal $D^N={\diag}(\eta^N_1,\eta^N_2,\ldots,\eta^N_n)$.
Viewing our curve $\mathcal{C}$ as an $n$-fold branched cover of $\mathbb{P}\sp{1}$ the columns $U^N_j$ correspond to the various sheets of the cover.
Similarly, $L^S=U^S D^S\left(U^S\right)^{-1}$. From (\ref{bundleLax}) we see
that the corresponding eigenvectors are proportional: 
$U^S=U^N F$ with diagonal $F={\rm diag}(F_1,F_2,\ldots,F_n)$. 
It also follows from (\ref{bundleLax}) and $({d}/{dz}+M\sp{N,S})U\sp{N,S}=0$ that
\begin{align}
\frac{d}{dz} F=\frac{D^N}{\zeta}F.
\end{align}
Thus $F$ is diagonal with its diagonal elements of the form $F_i(z,\zeta,\eta)=f_i(\zeta)e^{z\eta_i/\zeta}$.  For the case we consider the spectral curve has monodromy permuting all of its sheets (and thus all of its eigenvectors $U^{N,S}_j$); thus all functions the  $f_i(\zeta)$ must be equal and so
$F_i(z,\zeta,\eta)=f(\zeta)e^{z\eta_i/\zeta}$. Let us now focus on the $I$-th row of $U^N$: this defines a function $\Psi^N_I$ on $\mathcal{C}$ everywhere outside $\zeta=\infty$.  Similarly, the $I$-th row of $U^S$ defines a function $\Psi^S_I$ on $\mathcal{C}$ everywhere besides $\zeta=0.$ These two functions are related by $\Psi^S_I=\Psi^N_I\,f(\zeta)e^{z\eta/\zeta}$; in other words the pair of $I$-th rows of $U^N$ and $U^S$ define a section of the bundle over $\mathcal{C}$ with transition function $f(\zeta)\exp(z\eta/\zeta)$. Now by the Birkhoff-Grothendieck theorem a line bundle over $\mathbb{P}^1$ 
must be of the form $\mathcal{O}(r)$ for some integer $r$, and so by a change of trivialization we can take $f(\zeta)=1/\zeta^r$.

At this point we come to different choices in specifying the sections $U$. Let ${L}^z$ be the
line bundle with transition function $e^{z\eta/\zeta}$. For Hitchin $z\in[0,2]$ with $z=1$ the point where
${L}^1\otimes\pi\sp*\mathcal{O}(n-1)$ has $n$ sections with real structure. Hitchin's sections  \cite{hitchin_83} are defined just in terms of the two patches $N,S$ and $r=n-1$ in the above. 
The Baker-Akhiezer construction of \cite{ercolani_sinha_89,Braden2010d} describes the flow in terms of $s \in [-1,1]$ ($z=s+1$) and a 
line bundle $\mathcal{L}_\delta$ corresponding to a  nonspecial divisor 
$\delta=\sum_{i=1}\sp{g_{\mathcal{C}}+n-1}\delta_i$ 
of degree $g_{\mathcal{C}}+n-1=\deg\pi\sp*\mathcal{O}(n-1)$. 
The line bundles $\mathcal{L}_\delta$ and ${L}^1\otimes\pi\sp*\mathcal{O}(n-1)$ are linearly equivalent.
On $\mathcal{C}\setminus \{\infty_i\}_{i=1}\sp{n}$ the Baker-Akhiezer function is meromorphic with poles
in $\delta$; the transition functions for this line bundle are around each of the $\delta_i$. Thus in the $NS$
transition function we have $r=0$. In what follows we set 
\begin{equation}
\label{defF}
F=\diag\left(e\sp{z\eta_i/\zeta} /\zeta^l \right)
\end{equation}
where $l=n-1$ when describing Hitchin's choice of section and $l=0$ when describing the Baker-Akhiezer sections.

\subsection{Identifying $h(z)$}
The bundle structure just described enables us to identify $h(z)$. First observe that if
$\left[ d/dz +M\right]U=0$ then $\left[ d/dz -M\sp\dagger \right]U\sp{\dagger\, -1}=0$. Using this together
with (\ref{bundlereality}) shows 
$\left[ d/dz +M\sp{S}(\zeta\sp{S})\right]U\sp{N}(z,-1/\bar{\zeta})\sp{\dagger\, -1}=0$ and consequently
that 
\begin{equation}\label{defUS}
 \mathcal{D}\sp{S}(\zeta\sp{S}):=
U\sp{N}(z,-1/\bar{\zeta})\sp{\dagger} U\sp{S}(z,\zeta\sp{S})
\end{equation}
is a $z$-independent matrix. Now 
\begin{align*}
 \mathcal{D}\sp{S}(\zeta\sp{S}) D\sp{S}(\zeta\sp{S})
 &=U\sp{N}(z,-1/\bar{\zeta})\sp{\dagger} L\sp{S}(\zeta\sp{S}) U\sp{S}(z,\zeta\sp{S})
 =-\left(  L\sp{N}(-1/\bar{\zeta}) U\sp{N}(z,-1/\bar{\zeta}) \right)\sp\dagger U\sp{S}(z,\zeta\sp{S})\\
 &=- D\sp{N}(-1/\bar{\zeta})\sp\dagger  \mathcal{D}\sp{S}(\zeta\sp{S})
 = D\sp{S}(\zeta\sp{S})  \mathcal{D}\sp{S}(\zeta\sp{S}).
\end{align*}
Using the fact that the diagonal matrices $D\sp{N,\, S}$ have distinct entries 
for generic $(\zeta,\eta)$  we deduce that $\mathcal{D}\sp{S}$ is diagonal.  We similarly define the 
constant diagonal matrix
\begin{equation}\label{N}
\mathcal{D}\sp{N}(\zeta)=(-1)\sp{l} \,U\sp{S}(z,-1/{{\bar\zeta}\sp{S}})\sp\dagger U\sp{N}(z,\zeta).
\end{equation}
Then from $U\sp{S}=U\sp{N}F$ on an overlap we have
$$\mathcal{D}\sp{S}(\zeta\sp{S})
= \left( U\sp{S}(z,-\frac1{{\bar\zeta}\sp{S}}) F(z,-\frac1{{\bar\zeta}})\sp{-1} \right)\sp\dagger 
U\sp{N}(s,\zeta) F(z,\zeta)
=  (-1)\sp{l} F(z,-\frac1{\bar\zeta})\sp{\dagger\,-1} \mathcal{D}\sp{N} (\zeta)F(z,\zeta).
$$
Upon using that all of the matrices here are diagonal and that under the antiholomorphic involution
(\ref{invol}) 
$\eta/\zeta\rightarrow \bar\eta /{\bar\zeta}$ we have
$$\mathcal{D}\sp{S}= (-1)\sp{l}  f(\zeta)/f( -1/\bar\zeta)\sp\dagger  
\mathcal{D}\sp{N}= \frac{1}{\zeta\sp{2l}} \mathcal{D}\sp{N}.
$$

In the Hitchin setting $\{ \mathcal{D}\sp{N},\mathcal{D}\sp{S}\}$ yield a section of $\pi\sp\ast\mathcal{O}(2n-2)$;
such a section takes the form $c_0 \eta^{n-1}+c_1(\zeta)\eta^{n-2}+\ldots+c_{n-1}(\zeta)$
where $c_{l}(\zeta)$ is of degree $2l$ in $\zeta$. In our setting we get a regular function constant in $z$.

\subsection{Expressing $h(z)$ in terms of the Baker-Akhizer function}
Now  with $U=U\sp{N}=g(z)\sp{\dagger\, -1}\widehat\Phi (z,\zeta)$, where $\widehat\Phi (z,\zeta)$ is the $n\times n$ matrix
whose columns\footnote{
$
\widehat{\Phi}(z,\zeta):= ( \boldsymbol{\Phi}_1(z,P_1),\ldots,  
\boldsymbol{\Phi}_n(z,P_n)     )$, 
where $P_i=(\zeta,\eta_i)$. 
} are the Baker-Akhiezer functions for the preimages of $\zeta$, we have
\begin{align*}
\mathcal{D}\sp{N}(\zeta)=\left[ U\sp{N}(z,-\frac1{\bar\zeta}) \diag(e\sp{ z {\bar\eta_i}/{\bar\zeta}})\right]\sp{\dagger}g\sp{\dagger\, -1}\widehat\Phi (z,\zeta)
= \diag(e\sp{ z {\eta_i}/{\zeta}})\widehat \Phi(z, -\frac1{\bar\zeta})\sp{\dagger}h(z)\sp{-1}\widehat\Phi (z,\zeta).
\end{align*}
Now the left-handside is $z$-independent, so we may evaluate this at $z=0$ using $h(0)=1_n$ to give
$$
\mathcal{D}\sp{N}(\zeta)
=\widehat \Phi(0, -\frac1{\bar\zeta})\sp{\dagger}\widehat\Phi (0,\zeta).
$$
Employing this we obtain 
\begin{equation}
\label{hbaz}
h(z)=\widehat\Phi (z,\zeta)\diag(e\sp{ z {\eta_i}/{\zeta}}) \widehat\Phi (0,\zeta)\sp{-1}
\widehat \Phi(0, -\frac1{\bar\zeta})\sp{\dagger\, -1}\widehat \Phi(z, -\frac1{\bar\zeta})\sp{\dagger}.
\end{equation}
Again using the diagonality of $\mathcal{D}\sp{N}$ this expression shows that $h(z)\sp\dagger$ is obtained by the interchange of $\zeta\rightarrow -\frac1{\bar\zeta}$; but $h(z)$ is independent of $\zeta$ and so
we see that $h(z)$ is hermitian, as is required.
Finally, using the independence of $h(z)$ on the value of $\zeta$ being used,
 upon taking $\zeta$ to infinity and using (\ref{BAnorm}) and the hermiticity of $h$ we may rewrite this to give
\begin{theorem} \label{htheorem}
Let $\widehat \Phi(z, 0)$ be the Baker-Akhiezer function with $(I,i)$-entry ${\widehat\Phi}_I(z, 0_i)$, where
$0_i=\mathcal{J}(\infty_i)$, then
\begin{equation}
\label{hfinal}
h(z)=
\widehat \Phi(z, 0)\widehat \Phi(0, 0)\sp{-1}.
\end{equation}
\end{theorem}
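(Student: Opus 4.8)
The plan is to extract $h(z)$ from the $\zeta$-independent identity (\ref{hbaz}), and the cleanest route I see is to observe that the objects entering its right-hand side are governed by transparent $z$-dynamics once $\zeta$ is specialised. The first thing I would do is identify $\widehat\Phi(z,0)$ concretely: since the involution (\ref{invol}) is a bijection carrying the fibre over $\zeta=\infty$ to the fibre over $\zeta=0$, the points $0_i=\mathfrak{J}(\infty_i)$ are precisely the $n$ preimages of $\zeta=0$ on $\mathcal{C}$. Hence the matrix $\widehat\Phi(z,0)$, whose columns are the Baker--Akhiezer functions at the $0_i$, is nothing but the fundamental matrix of solutions to the standard scattering equation (\ref{standardscattering}) evaluated at the parameter value $\zeta=0$.

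With this in hand the result is almost immediate. Using (\ref{defdhh}) the scattering equation (\ref{standardscattering}) reads $\dot{\widehat\Phi}=(\dot h h^{-1}+\zeta\nu^\dagger)\widehat\Phi$; setting $\zeta=0$ annihilates the driving term and leaves $\dot{\widehat\Phi}(z,0)=\dot h h^{-1}\,\widehat\Phi(z,0)$. Consequently $\frac{d}{dz}\big(h(z)^{-1}\widehat\Phi(z,0)\big)=0$, so $h(z)^{-1}\widehat\Phi(z,0)$ is independent of $z$; evaluating at $z=0$ and using $h(0)=1_n$ identifies this constant as $\widehat\Phi(0,0)$, whence $h(z)=\widehat\Phi(z,0)\,\widehat\Phi(0,0)^{-1}$ as claimed, the invertibility of $\widehat\Phi(0,0)$ being the usual nonspeciality assumption already implicit in the statement.

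To connect this with the limiting procedure that (\ref{hbaz}) directly invites, I would alternatively send $\zeta\to\infty$ in (\ref{hbaz}). The two factors carrying the argument $-1/\bar\zeta$ are then evaluated at points approaching the $0_i$, away from the punctures, and converge to $\widehat\Phi(0,0)^{\dagger\,-1}\widehat\Phi(z,0)^\dagger$, while $\widehat\Phi(0,\zeta)\to1_n$ by the normalisation (\ref{BAnorm}). The delicate factor is $\widehat\Phi(z,\zeta)\diag(e^{z\eta_i/\zeta})$: its $i$-th column carries the essential singularity $e^{z\zeta\nu_i^\dagger}$, so one must check that $e^{z\zeta\nu_i^\dagger}e^{z\eta_i/\zeta}\to1$, i.e. that $\eta_i/\zeta+\nu_i^\dagger\zeta\to0$ at $\infty_i$. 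This is exactly where I expect the work to lie: it needs the subleading behaviour of the sheet $\eta_i$ near $\infty_i$, not merely the leading relation $\eta/\zeta\sim-\nu_i^\dagger\zeta$, and any diagonal exponential that naively survives the limit has to be eliminated using the hermiticity of $h$ (equivalently the reality structure $\mathfrak{J}$ interchanging $\zeta\leftrightarrow-1/\bar\zeta$). The direct ODE argument above is appealing precisely because it sidesteps this bookkeeping, reducing the theorem to the single observation that the $\zeta=0$ scattering flow is generated by $\dot h h^{-1}$.
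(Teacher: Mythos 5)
Your argument is correct, and it takes a genuinely different route from the paper's own proof. The paper obtains the theorem by first deriving the $\zeta$-independent identity (\ref{hbaz}) from the bundle picture --- the constancy and diagonality of $\mathcal{D}\sp{N}$, which rest on the reality conditions (\ref{bundlereality}) --- and then letting $\zeta\to\infty$, invoking the normalisation (\ref{BAnorm}) together with the hermiticity of $h$. You instead work at $\zeta=0$: since the involution (\ref{invol}) carries the fibre over $\zeta=\infty$ bijectively to the fibre over $\zeta=0$, the points $0_i=\mathfrak{J}(\infty_i)$ are exactly the preimages of $\zeta=0$, so $\widehat\Phi(z,0)$ is a fundamental solution of the scattering equation (\ref{standardscattering}) with the driving term $\zeta\nu\sp\dagger$ switched off, i.e.\ of $\dot X=\dot h h\sp{-1}X$ by (\ref{defdhh}); but $h(z)$ itself trivially solves this same linear ODE, so the two fundamental solutions differ by a constant right factor, which $h(0)=1_n$ identifies as $\widehat\Phi(0,0)$. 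This is shorter, and it sidesteps the one delicate point in the paper's route, which you correctly isolate: the limit of $\widehat\Phi(z,\zeta)\diag(e\sp{z\eta_i/\zeta})$ needs the subleading behaviour of $\eta_i/\zeta+\zeta\nu_i\sp\dagger$ at $\infty_i$, not merely the leading relation $\eta/\zeta\sim-\nu_i\sp\dagger\zeta$, and any residual diagonal exponential is disposed of in the paper only implicitly through hermiticity. What the paper's longer route buys is the formula (\ref{hbaz}) valid at every $\zeta$, which manifests the $\zeta$-independence and delivers the hermiticity of $h(z)$ as a byproduct, whereas in your argument hermiticity is an input from $h=g\sp\dagger g$ rather than a conclusion. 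Both proofs rest on the same foundations --- the identification $U=g\sp{\dagger\,-1}\widehat\Phi$, equation (\ref{defdhh}), $h(0)=1_n$, and generic invertibility (nonspeciality) of $\widehat\Phi(0,0)$ --- so yours stands as a legitimate, more elementary derivation of (\ref{hfinal}).
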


We remark that the inverse of $\widehat \Phi(z, \zeta)$ may be constructed with the dual Baker-Akhiezer function; further, we see that this expression is independent of the ordering of sheets that is implicit in the
Baker-Akhiezer function.

At this stage we have established Theorem \ref{reconstructionthm} circumventing
the actual construction of Nahm data. To determine the Nahm data one
must solve for $h(z)=\widehat \Phi(z, 0)\widehat \Phi(0, 0)\sp{-1}=g\sp\dagger g$ where $h$ and $g$ must
satisfy (\ref{defdhh}). This constraint makes finding $g$ more difficult than simply a matrix factorization
problem; although Cholesky factorization enables one to factorize $h$ such a factorization is only
defined up to a unitary transformation which is determined by  (\ref{defdhh}). An interesting question is
whether the function theory of $\mathcal{C}$ might help in this determination.

\section{Example}
We shall give an example of Theorem \ref{htheorem} for the case $n=2$. The spectral curve for $n=2$
was constructed by Hurtubise \cite{hurtubise_83} and we shall employ the Ercolani-Sinha \cite{ercolani_sinha_89} form
\begin{equation}
0=\eta^2+\frac{K^2}4\left( \zeta^4+2(k^2-k'^2)\zeta^2+1\right),   \label{curve}
\end{equation}
where ${K}={K}(k)$ is a complete elliptic integral, and $\eta$ is related to the
spatial coordinates by{
\begin{equation}   
\eta=(x_2-\imath x_1)-2\zeta x_3-(x_2+\imath x_1)\zeta^2.
\label{AtiyahWard} 
\end{equation}
It was known that the solutions
\begin{equation}
\begin{split}\label{nahmsolution}
f_1(z)&={K}\,\frac{\mathrm{dn}\,{K}z
}{\mathrm{cn}\,{K}z}=\frac{\pi\theta_2 \theta_3 }{2}
\,\frac{\theta_3(z/2)}{\theta_2(z/2)},\quad f_2(z)={K}
k'\,\frac{\mathrm{sn}\,{K}z
}{\mathrm{cn}\,{K}z}=\frac{\pi\theta_3 \theta_4 }{2}
\,\frac{\theta_1(z/2)}{\theta_2(z/2)},\\
 f_3(z)&={K}k'\,\frac{1}{\mathrm{cn}\,{K}z}=\frac{\pi\theta_2 \theta_4 }{2}
\,\frac{\theta_4(z/2)}{\theta_2(z/2)}
\end{split}
\end{equation}
 to the spinning top equations 
$ \dot f_1=f_2\,f_3$ (and cyclic) gave solutions to the Nahm equations via $T_j(z)=\frac{\sigma_j}{2\imath}\, f_j(z)$ and the work of \cite{ercolani_sinha_89} and (with corrections in) \cite{Braden2010d} derived these from first principles. As we have noted the construction proceeds via the Baker-Akhiezer function but
this only yields the Nahm data up to a gauge transformation $g\sp{\dagger -1}$. In the works cited the differential equation for this gauge transformation was solved explicitly and we will compare the expression for $h(z)$ given by Theorem \ref{htheorem} with that obtained from the solution of the differential equation.
For higher charges the associated differential equation has not been solved.

The Baker-Akhiezer function for the problem at hand takes the form
\begin{equation}\label{bafull}
\Phi(z,P)=\chi(P)\begin{pmatrix}-\theta_3(\alpha(P))\theta_2(\alpha(P)-z/2) \\
\theta_1(\alpha(P))\theta_4(\alpha(P)-z/2)
\end{pmatrix}\,
\frac{e\sp{\beta_1(P) z}}{\theta_2(z/2)}
\end{equation}
where
\begin{equation}\label{bachi}
\chi(P)=
\frac{\theta_2(1/4)\theta_3(1/4)}{\theta_3(0)
\theta_1(\alpha(P)-1/4)\theta_4(\alpha(P)+1/4)},
\qquad 
\beta_1(P)=
\frac14\left\{ \frac{\theta_1'(\alpha(P))}{\theta_1(\alpha(P))}+
\frac{\theta_3'(\alpha(P))}{\theta_3(\alpha(P))}\right\} ,
\end{equation}
and we have the Abel map $\alpha(P)=\int_{\infty_1}\sp{P}d\zeta/4\eta$ for $P=(\zeta,\eta)$.
Then with $P_{1,2}$ corresponding to  $(\zeta,\pm\eta)$ we have
\begin{equation}
\begin{split}
\widehat \Phi(z,\zeta)=&
\begin{pmatrix}-\theta_3(\alpha(P_1))\theta_2(\alpha(P_1)-z/2) &
-\theta_3(\alpha(P_2))\theta_2(\alpha(P_2)-z/2)\\
\theta_1(\alpha(P_1))\theta_4(\alpha(P_1)-z/2) &
\theta_1(\alpha(P_2))\theta_4(\alpha(P_2)-z/2)
\end{pmatrix}\\
&\qquad \times
\begin{pmatrix}\frac {\chi(P_1) e\sp{\beta_1(P_1) z}} {\theta_2(z/2)}&0\\
0& \frac{\chi(P_2)  e\sp{\beta_1(P_2) z} }{\theta_2(z/2)}
\end{pmatrix}.
\end{split}
\end{equation}
To evaluate this for $0_{1,2}$ we note (from \cite{Braden2010d}) that  $\alpha(0_1)=-\tau/2$,
$\alpha(0_2)=-1/2$.
Using $\theta_4(\tau/2)=0=\theta_2(1/2)$ we may simplify (\ref{hfinal}) to
\begin{align*}
h(z)&= 
\begin{pmatrix}
\frac{ \theta_2(\alpha(0_1)-z/2) }{\theta_2[0_1]} &
-\frac{\theta_3[0_2] \theta_2(\alpha(0_2)-z/2)}{ \theta_1[0_2]\theta_4[0_2]  }\\
-\frac{\theta_1[0_1]\theta_4(\alpha(0_1)-z/2)}{\theta_3[0_1] \theta_2[0_1] } &
\frac{\theta_4(\alpha(0_2)-z/2)}{\theta_4[0_2]}
\end{pmatrix}
\begin{pmatrix}\frac {\theta_2 e\sp{\beta_1(0_1) z}} {\theta_2(z/2)}&0\\
0& \frac{\theta_2  e\sp{\beta_1(0_2) z} }{\theta_2(z/2)}
\end{pmatrix}
\\
&=
\begin{pmatrix}
\frac{ \theta_2(\tau/2+z/2) }{\theta_2(\tau/2)} &
\frac{\theta_3(1/2) \theta_2(1/2+z/2)}{ \theta_1(1/2)\theta_4(1/2)  }\\
\frac{\theta_1(\tau/2)\theta_4(\tau/2+z/2)}{\theta_3(\tau/2)\theta_2(\tau/2)} &
\frac{\theta_4(1/2+z/2)}{\theta_4(1/2)}
\end{pmatrix}
\begin{pmatrix}\frac {\theta_2 e\sp{\beta_1(0_1) z}} {\theta_2(z/2)}&0\\
0& \frac{\theta_2  e\sp{\beta_1(0_2) z} }{\theta_2(z/2)}
\end{pmatrix}
\\
&=
\begin{pmatrix}
\frac{ B(z/2)\theta_3(z/2)  }{ B(0)\theta_3 } &
-\frac{\theta_4 \theta_1(z/2)}{ \theta_2\theta_3 }\\
-\frac{B(z/2)\theta_4 \theta_1(z/2)}{ B(0)\theta_2\theta_3 }&
\frac{\theta_3(z/2)}{\theta_3}
\end{pmatrix}
\begin{pmatrix}\frac {\theta_2 e\sp{\beta_1(0_1) z}} {\theta_2(z/2)}&0\\
0& \frac{\theta_2  e\sp{\beta_1(0_2) z} }{\theta_2(z/2)}
\end{pmatrix}\\
\intertext{where $B(v)=\exp(-i\pi[v+\tau/4])$}
&=
\frac1{\theta_3 \theta_2(z/2)}
\begin{pmatrix} 
\theta_2 \theta_3(z/2) & -\theta_4 \theta_1(z/2) \\
-\theta_4 \theta_1(z/2) & \theta_2 \theta_3(z/2)
\end{pmatrix}
\begin{pmatrix}e\sp{\beta_1(0_1) z -i\pi z/2 }&0\\
0& e\sp{\beta_1(0_2) z} 
\end{pmatrix}.
\end{align*}
Now upon using the quasi-periodicity of the theta functions,
$$\frac{\theta_1'(-\tau/2)}{\theta_1(-\tau/2)}=i\pi -\frac{\theta_4'}{\theta_4},
\qquad \frac{\theta_3'(-\tau/2)}{\theta_3(-\tau/2)}=i\pi -\frac{\theta_2'}{\theta_2},$$
we find
$$\beta(0_1)=\frac{i\pi}2-\frac14\left\{ \frac{\theta_2'}{\theta_2}+ \frac{\theta_4'}{\theta_4} \right\}
=\frac{i\pi}2,
\qquad
\beta(0_2)=-\frac14\left\{ \frac{\theta_2'}{\theta_2}+ \frac{\theta_4'}{\theta_4} \right\}=0.
$$
Here we have used that $\theta_{2,4}$ are even and so $\theta_{2,4}'=0$. Thus
$$h(z)=\frac1{\theta_3 \theta_2(z/2)}
\begin{pmatrix} 
\theta_2 \theta_3(z/2) & -\theta_4 \theta_1(z/2) \\
-\theta_4 \theta_1(z/2) & \theta_2 \theta_3(z/2)
\end{pmatrix}.
$$
In terms of (\ref{nahmsolution})
and using $K=\pi \theta_3^2/2$ we arrive at
\begin{equation}\label{charge2hfinal}
h(z)=\frac1{K}\begin{pmatrix}f_1&-f_2\\-f_2&f_1 \end{pmatrix},
\qquad
h\sp{-1}(z)=\frac1{K}\begin{pmatrix}f_1&f_2\\f_2&f_1 \end{pmatrix}.
\end{equation}

We wish to compare (\ref{charge2hfinal}) with
$h\sp{-1}=C\sp\dagger C$
where $C$ is the solution to the differential equation
$$C\sp{-1}\dot C= \dot C C\sp{-1}=\frac{f_3}2\begin{pmatrix}0&1\\1&0\end{pmatrix}.$$
Then
$$C=C\sp\dagger=C\sp{T}=\left(\begin{array}{cc}F(z)&G(z)\\G(z)&F(z)
\end{array} \right),$$
with solution
\[ F=\cosh\left(\int_0\sp{z} f_3(s)ds/2 \right)=\left[ p(z)+1/p(z)\right]/2,\quad
G=\sinh\left(\int_0\sp{z} f_3(s)ds/2 \right)=\left[ p(z)-1/p(z)\right]/2,\]
where\footnote{
Here we have made use of
\begin{equation*}
\int \frac{du}{\mathrm{cn}\,u } =\frac{1}{k'}\,\mathrm{ln}\frac{
\mathrm{dn}u + k' \mathrm{sn} u }{\mathrm{cn} u}.\label{primitive}
\end{equation*}
}
\begin{align*}
p(z)&=\exp\left(\int_0\sp{z} f_3(s)ds/2 \right)=\exp\left(k' K\int_0\sp{z} \frac{ds}{\mathrm{cn}\,{K}z} \right)
= \left[ \frac{ \mathrm{dn}\,{K}z + k'\mathrm{sn}\,{K}z}{ \mathrm{cn}\,{K}z}
\right]\sp{1/2}.
\end{align*}
{Now
\begin{align} \label{FGrels}
G^2(z)&=\frac12\left( \frac{\mathrm{dn} (Kz;k)}{\mathrm{cn}
(Kz;k)}-1\right)=\frac12\left( \frac{f_1}{K}-1\right),&\quad 2F(z)G(z)&=k'\frac{\mathrm{sn}(Kz;k)}{\mathrm{cn}(Kz;k)}=\frac{f_2}{K}, \\
F^2-G^2&=1,& F^2+G^2&=\frac{f_1}{K},
 \nonumber
\end{align}
and consequently
$C^2(z)=\dfrac{1}{K}\begin{pmatrix}f_1&f_2\\   f_2&f_1
\end{pmatrix},
$
so verifying (\ref{charge2hfinal}).}

\section*{Acknowledgements}
HWB thanks Sergey Cherkis, Derek Harland and Paul Sutcliffe for helpful conversations.

\appendix
\section{The Panagopoulos Formulae}\label{section:panagopoulos}

Panagopolous \cite{panagopo83} introduced three formulae to evaluate integrals appearing in the ADHMN construction giving a proof for one of these; his other formulae were checked against a calculation on
one axis. We shall prove and slightly modify here the Panagopolous formulae, extending his method to the case $x_4,T_4$ possibly nonzero and correcting an ordinary differential operator by a partial differential
operator (which reduces to the former on an axis).

Panagopolous's approach to evaluate integrals 
$$\int dz\, \boldsymbol{{v}}_a\sp\dagger
\mathcal{A}\boldsymbol{{v}}_b$$
for any
given operator $\mathcal{A}$ and any two solutions
$\boldsymbol{{v}}_{a,b}$ of $\Delta^{\dagger}\boldsymbol{v}=0$ 
is to seek  an
operator $\mathcal{B}$ such that
\begin{equation}
\boldsymbol{{v}}_a\sp\dagger
\mathcal{A}\boldsymbol{{v}}_b=\frac{\mathrm{d}}{\mathrm{d}z} \left(\boldsymbol{{v}}_a\sp\dagger
\mathcal{B}\boldsymbol{{v}}_b\right),
\end{equation}
so giving a primitive. Using (\ref{deltadagger}, \ref{pandefs}) then if $\boldsymbol{v}$ is any solution of
$\Delta^{\dagger}\boldsymbol{v}=0$ we have that
\begin{align*}
1_{2n}\frac{\mathrm{d}}{\mathrm{d}z}\boldsymbol{v}=\left[\im x_4+T_4-(\mathcal{H}+\mathcal{F})\right]\boldsymbol{v},
\end{align*}
whence
\begin{align*}
\boldsymbol{{v}}_a\sp\dagger\mathcal{A}\boldsymbol{{v}}_b
&=\frac{\mathrm{d} \boldsymbol{{v}}_a\sp\dagger}{\mathrm{d}z}
\mathcal{B}\boldsymbol{{v}}_b
+\boldsymbol{{v}}_a\sp\dagger
\frac{\mathrm{d}\mathcal{B}}{\mathrm{d}z}\boldsymbol{{v}}_b
+\boldsymbol{{v}}_a\sp\dagger\mathcal{B}
\frac{\mathrm{d}\boldsymbol{{v}}_b}{\mathrm{d}z}
=\boldsymbol{{v}}_a\sp\dagger\left(
 \frac{\mathrm{d}\mathcal{B}}{\mathrm{d}z} -[T_4,\mathcal{B}]-(\mathcal{H}+\mathcal{F})\mathcal{B}-\mathcal{B}(\mathcal{H}+\mathcal{F})
  \right)\boldsymbol{{v}}_b.
\end{align*}
The diagonal term with $x_4$ cancels here. Thus we seek to relate the operators $\mathcal{A}$ and
$\mathcal{B}$ by
\[\mathcal{A}=
\mathcal{D} (\mathcal{B})
:=
\frac{\mathrm{d}\mathcal{B}}{\mathrm{d}z}
 -[T_4,\mathcal{B}]
-(\mathcal{H}+\mathcal{F})\mathcal{B}-\mathcal{B}(\mathcal{H}+\mathcal{F}).
\]
where we have defined the operator $\mathcal{D}$.

We will use the following relations (recall $T_4$ is shorthand for $1_2\otimes T_4$),
\begin{align}\label{qprops}
 \mathcal{Q}\mathcal{H}= \mathcal{H}\mathcal{F}- \mathcal{F}\mathcal{H}
& =\left[\mathcal{H},\mathcal{F}\right]=-\mathcal{H} \mathcal{Q}, 
 \qquad
\mathcal{H}=  \mathcal{Q}\sp{-1}\left[\mathcal{H},\mathcal{F}\right],
\qquad
[\mathcal{H}, T_4]=0,
\\
x_{i}\left({\partial_i \mathcal{H}}\right)&=\mathcal{H},
\quad\qquad
\mathcal{H}^2=r^2 1_{2n}, \nonumber
\\
\mathcal{F}^2&=-1_2\otimes
\sum_{i=1}^3T_iT_i-\im\sum_{i,j,k=1}^3\epsilon_{ijk}\,\sigma_k\otimes
T_iT_j ,\nonumber
\\
\frac{d\mathcal{F}}{\mathrm{d}z} -[T_4, \mathcal{F}]&=\im\sum_{i,j,k=1}^3\epsilon_{ijk}\, \sigma_k\otimes
T_iT_j  .  \nonumber
\intertext{Therefore}
0&= \left[\mathcal{F}^2+\frac{\mathrm{d}\mathcal{F}}{\mathrm{d}z} -[T_4, \mathcal{F}],\mathcal{H}\right].
\label{Fcomm}
\end{align}
We now establish the integrals in (\ref{panagopolous}).
\begin{proposition} \label{panagopo1} With 
$
\mathcal{Q}=\frac{1}{r^2} \mathcal{H}\mathcal{F}\mathcal{H}-\mathcal{F} 
$
then 
\begin{equation}
\mathcal{D}(\mathcal{Q}^{-1})=\frac{\mathrm{d}\mathcal{Q}^{-1}}{\mathrm{d}z}
-[T_4, \mathcal{Q}^{-1}]
-(\mathcal{H}+\mathcal{F})\mathcal{Q}^{-1}
-\mathcal{Q}^{-1}(\mathcal{H}+\mathcal{F}) =1_{2n} \label{panrel}
\end{equation}
and we have the antiderivative (\ref{pannorm})
\begin{equation*}
\int \mathrm{d}z\, \boldsymbol{{v}}_a\sp\dagger \boldsymbol{{v}}_b
= \boldsymbol{{v}}_a\sp\dagger \mathcal{Q}^{-1}
\boldsymbol{{v}}_b.
\end{equation*}
\end{proposition}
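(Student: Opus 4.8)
The plan is to establish the operator identity (\ref{panrel}), namely $\mathcal{D}(\mathcal{Q}^{-1})=1_{2n}$, since the antiderivative (\ref{pannorm}) is then immediate: specialising the primitive relation derived above to $\mathcal{A}=1_{2n}$ and $\mathcal{B}=\mathcal{Q}^{-1}$ gives $\boldsymbol{v}_a^\dagger\boldsymbol{v}_b=\frac{d}{dz}\left(\boldsymbol{v}_a^\dagger\mathcal{Q}^{-1}\boldsymbol{v}_b\right)$, and integrating in $z$ yields (\ref{pannorm}). So the whole content lies in (\ref{panrel}).

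First I would exploit the anticommutation $\mathcal{Q}\mathcal{H}=-\mathcal{H}\mathcal{Q}$ from (\ref{qprops}), which passes to $\mathcal{H}\mathcal{Q}^{-1}=-\mathcal{Q}^{-1}\mathcal{H}$. Consequently the $\mathcal{H}$-terms in $(\mathcal{H}+\mathcal{F})\mathcal{Q}^{-1}+\mathcal{Q}^{-1}(\mathcal{H}+\mathcal{F})$ cancel, leaving the anticommutator $\{\mathcal{F},\mathcal{Q}^{-1}\}$. Combining this with $\frac{d\mathcal{Q}^{-1}}{dz}-[T_4,\mathcal{Q}^{-1}]=-\mathcal{Q}^{-1}\bigl(\dot{\mathcal{Q}}-[T_4,\mathcal{Q}]\bigr)\mathcal{Q}^{-1}$ and conjugating the entire identity by $\mathcal{Q}$ on the left and right, (\ref{panrel}) becomes equivalent to the manifestly $\mathcal{Q}^{-1}$-free statement
\[
\dot{\mathcal{Q}}-[T_4,\mathcal{Q}]+\{\mathcal{F},\mathcal{Q}\}+\mathcal{Q}^2=0.
\]

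To verify this reduced identity I would expand every term using $\mathcal{Q}=\frac{1}{r^2}\mathcal{H}\mathcal{F}\mathcal{H}-\mathcal{F}$, the $z$-independence of $\mathcal{H}$, the relation $[T_4,\mathcal{H}]=0$, and $\mathcal{H}^2=r^2 1_{2n}$ from (\ref{qprops}). Writing $G:=\dot{\mathcal{F}}-[T_4,\mathcal{F}]$, a short computation gives $\dot{\mathcal{Q}}-[T_4,\mathcal{Q}]=\frac{1}{r^2}\mathcal{H}G\mathcal{H}-G$, while the cross terms in $\{\mathcal{F},\mathcal{Q}\}+\mathcal{Q}^2$ cancel pairwise (again via $\mathcal{H}^2=r^2$) to leave $\frac{1}{r^2}\mathcal{H}\mathcal{F}^2\mathcal{H}-\mathcal{F}^2$. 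Adding these, the left-hand side collapses to $\frac{1}{r^2}\mathcal{H}\bigl(G+\mathcal{F}^2\bigr)\mathcal{H}-\bigl(G+\mathcal{F}^2\bigr)$.

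The decisive step, which I expect to be the main obstacle, is the relation (\ref{Fcomm}): it asserts precisely that $G+\mathcal{F}^2=\mathcal{F}^2+\frac{d\mathcal{F}}{dz}-[T_4,\mathcal{F}]$ commutes with $\mathcal{H}$. Granting this, $\mathcal{H}(G+\mathcal{F}^2)\mathcal{H}=(G+\mathcal{F}^2)\mathcal{H}^2=r^2(G+\mathcal{F}^2)$, and the displayed expression vanishes identically, establishing the reduced identity and hence (\ref{panrel}) and (\ref{pannorm}). The only laborious part is the term-by-term expansion of $\{\mathcal{F},\mathcal{Q}\}+\mathcal{Q}^2$; the genuine structure sits in the anticommutation $\{\mathcal{H},\mathcal{Q}\}=0$ together with the commutation (\ref{Fcomm}), so no input beyond (\ref{qprops}) and (\ref{Fcomm}) is required.
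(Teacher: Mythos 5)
Your proposal is correct and takes essentially the same route as the paper's own proof: both remove the inverses by conjugating with $\mathcal{Q}$, reducing (\ref{panrel}) to the polynomial identity $\dot{\mathcal{Q}}-[T_4,\mathcal{Q}]+\{\mathcal{H}+\mathcal{F},\mathcal{Q}\}+\mathcal{Q}^2=0$, which is then checked by expansion using $\mathcal{H}^2=r^2 1_{2n}$, $[T_4,\mathcal{H}]=0$ and the commutation (\ref{Fcomm}). The only cosmetic differences are that you cancel the $\mathcal{H}$-terms at the outset via $\{\mathcal{H},\mathcal{Q}^{-1}\}=0$ and invoke (\ref{Fcomm}) abstractly, whereas the paper carries those terms through the expansion and substitutes the explicit relation $\dot{\mathcal{F}}-[T_4,\mathcal{F}]+\mathcal{F}^2=-1_2\otimes\sum_{i}T_iT_i$, which is what underlies (\ref{Fcomm}).
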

\begin{proof}
In this case $\mathcal{A}=1_{2n}$ and the left-hand side  of (\ref{panrel}) may be rewritten as follows
\begin{align*}
\mathcal{Q}^{-1} &\left[ -\frac{\mathrm{d}}{\mathrm{d}z}\left(
\frac{1}{r^2}\mathcal{H}\mathcal{F}\mathcal{H}-\mathcal{F}
\right)
+[T_4, \mathcal{Q}]
-(\mathcal{H}+\mathcal{F})\left(
\frac{1}{r^2}\mathcal{H}\mathcal{F}\mathcal{H}-\mathcal{F} \right)
-\left(
\frac{1}{r^2}\mathcal{H}\mathcal{F}\mathcal{H}-\mathcal{F}
\right)(\mathcal{H}+\mathcal{F}) \right]\mathcal{Q}^{-1}\\
=&\mathcal{Q}^{-1}
\left[-\frac{1}{r^2}\mathcal{H}
\left(\frac{\mathrm{d}\mathcal{F}}{\mathrm{d}z} -[T_4, \mathcal{F}]\right)
\mathcal{H}+\frac{\mathrm{d}\mathcal{F}}{\mathrm{d}z}
-[T_4, \mathcal{F}]
-\frac{1}{r^2}(\mathcal{H}+\mathcal{F})\mathcal{H}\mathcal{F}\mathcal{H}
-\frac{1}{r^2}\mathcal{H}\mathcal{F}\mathcal{H}(\mathcal{H}+\mathcal{F})\right.\\
&\qquad \qquad 
+(\mathcal{H}+\mathcal{F})\mathcal{F}+\mathcal{F}(\mathcal{H}+\mathcal{F})\bigg]\mathcal{Q}^{-1}\\
=&\mathcal{Q}^{-1}
\left[\frac{1}{r^2}\mathcal{H}\left(\mathcal{F}^2+1_2\otimes\sum_{i=1}^3
T_iT_i\right)\mathcal{H}-\left(\mathcal{F}^2+1_2\otimes\sum_{i=1}^3T_iT_i\right)\right.\\
&\left.\qquad\ -\frac{1}{r^2}\mathcal{H}^2\mathcal{F}\mathcal{H}-\frac{1}{r^2}
\mathcal{F}\mathcal{H}\mathcal{F}\mathcal{H}+\mathcal{H}\mathcal{F}+\mathcal{F}^2
-\frac{1}{r^2}\mathcal{H}\mathcal{F}\mathcal{H}^2-\frac{1}{r^2}
\mathcal{H}\mathcal{F}\mathcal{H}\mathcal{F}+\mathcal{F}\mathcal{H}+\mathcal{F}^2
\right]\mathcal{Q}^{-1}
\end{align*}
Upon noting that
\begin{align*}
\mathcal{Q}^2&=\left(\frac{1}{r^2}\mathcal{H}\mathcal{F}\mathcal{H}-\mathcal{F}\right)^2
=\frac{1}{r^4}\mathcal{H}\mathcal{F}\mathcal{H}^2\mathcal{F}\mathcal{H}+\mathcal{F}^2
-\frac{1}{r^2}\mathcal{H}\mathcal{F}\mathcal{H}\mathcal{F}
-\frac{1}{r^2}\mathcal{F}\mathcal{H}\mathcal{F}\mathcal{H}\\
&=\frac{1}{r^2}\mathcal{H}\mathcal{F}^2\mathcal{H}
-\frac{1}{r^2}\mathcal{H}\mathcal{F}\mathcal{H}\mathcal{F}
-\frac{1}{r^2}\mathcal{F}\mathcal{H}\mathcal{F}\mathcal{H}+\mathcal{F}^2
\end{align*}
and performing the appropriate cancellations we obtain the necessary
result.

\end{proof}

\begin{proposition}\label{panagopo2} With $\mathcal{Q}$ as previously given we have the antiderivative
\begin{equation}
\int \mathrm{d}z\, z \boldsymbol{{v}}_a\sp\dagger \boldsymbol{{v}}_b
= \boldsymbol{{v}}_a\sp\dagger \mathcal{S}
\boldsymbol{{v}}_b,
\qquad{\textrm{where}}\qquad
 \mathcal{S}=\mathcal{Q}^{-1} \left( z+\mathcal{H}\,\frac{x_{i}}{r^2}\frac{\partial}{\partial x_{i}}   \right)
.\label{parel2}
\end{equation}
\end{proposition}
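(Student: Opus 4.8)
The plan is to follow Panagopoulos's strategy exactly as in the proof of Proposition \ref{panagopo1}: exhibit $\mathcal{S}$ as a primitive by verifying $z\,\boldsymbol{v}_a^{\dagger}\boldsymbol{v}_b=\frac{d}{dz}\left(\boldsymbol{v}_a^{\dagger}\mathcal{S}\boldsymbol{v}_b\right)$ for any two solutions of $\Delta^{\dagger}\boldsymbol{v}=0$. The essential new feature compared with Proposition \ref{panagopo1} is that $\mathcal{S}$ is not a matrix but a first-order partial differential operator in the $x_i$, so the reduction $\mathcal{A}=\mathcal{D}(\mathcal{B})$ must be extended to account for the operators $\partial/\partial x_i$ acting on $\boldsymbol{v}_b$. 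Writing $\mathcal{K}:=\im x_4+T_4-(\mathcal{H}+\mathcal{F})$, so that $\frac{d}{dz}\boldsymbol{v}=\mathcal{K}\boldsymbol{v}$ for every solution and $\frac{d}{dz}\boldsymbol{v}_a^{\dagger}=\boldsymbol{v}_a^{\dagger}\mathcal{K}^{\dagger}$ (with the $x_4$ part cancelling as in the appendix preamble), I would split $\mathcal{S}=z\,\mathcal{Q}^{-1}+\mathcal{Q}^{-1}\mathcal{H}\,\frac{x_i}{r^2}\,\frac{\partial}{\partial x_i}$ and differentiate $\boldsymbol{v}_a^{\dagger}\mathcal{S}\boldsymbol{v}_b$ in $z$.

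The key point is that $\frac{d}{dz}$ commutes with $\partial/\partial x_i$, so differentiating the piece $\mathcal{Q}^{-1}\mathcal{H}\frac{x_i}{r^2}\,\partial_{x_i}\boldsymbol{v}_b$ uses $\frac{d}{dz}\partial_{x_i}\boldsymbol{v}_b=\partial_{x_i}(\mathcal{K}\boldsymbol{v}_b)=(\partial_{x_i}\mathcal{K})\boldsymbol{v}_b+\mathcal{K}\,\partial_{x_i}\boldsymbol{v}_b$ and hence produces the crucial extra term $(\partial_{x_i}\mathcal{K})\boldsymbol{v}_b$ with $\partial_{x_i}\mathcal{K}=-\partial_{x_i}\mathcal{H}=\sigma_i\otimes 1_n$, since $\mathcal{H}=-\sum_j x_j\sigma_j\otimes 1_n$. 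Collecting all contributions I would organise the derivative into a coefficient multiplying $\boldsymbol{v}_b$ and a coefficient multiplying $\partial_{x_i}\boldsymbol{v}_b$. Using $\mathcal{K}^{\dagger}X+X\mathcal{K}=-[T_4,X]-(\mathcal{H}+\mathcal{F})X-X(\mathcal{H}+\mathcal{F})=\mathcal{D}(X)-\frac{dX}{dz}$ (which encodes the anti-Hermiticity of $T_4$ and the Hermiticity of $\mathcal{H},\mathcal{F}$), the coefficient of $\partial_{x_i}\boldsymbol{v}_b$ is $\frac{x_i}{r^2}\,\mathcal{D}(\mathcal{Q}^{-1}\mathcal{H})$ while the coefficient of $\boldsymbol{v}_b$ is
\[
z\,\mathcal{D}(\mathcal{Q}^{-1})+\mathcal{Q}^{-1}+\mathcal{Q}^{-1}\mathcal{H}\,\frac{x_i}{r^2}\,(\sigma_i\otimes 1_n).
\]

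Both coefficients are then evaluated using results already in hand. The coefficient of $\partial_{x_i}\boldsymbol{v}_b$ vanishes because $\mathcal{D}(\mathcal{Q}^{-1}\mathcal{H})=0$: this is precisely the operator whose sandwiched form was shown to be $z$-independent in Theorem \ref{constancythm}, where the proof established $\mathcal{D}(\mathcal{Q}^{-1}\mathcal{H})=\mathcal{H}+\mathcal{Q}^{-1}[\mathcal{F},\mathcal{H}]=0$ via $\mathcal{H}=\mathcal{Q}^{-1}[\mathcal{H},\mathcal{F}]$ from (\ref{qprops}). In the coefficient of $\boldsymbol{v}_b$ the first term is $z\,1_{2n}$ by (\ref{panrel}), and the last term collapses through $\sum_i x_i(\sigma_i\otimes 1_n)=-\mathcal{H}$ together with $\mathcal{H}^2=r^2 1_{2n}$ to give $-\mathcal{Q}^{-1}$, which cancels the middle $\mathcal{Q}^{-1}$ and leaves exactly $z\,1_{2n}$. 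Thus $\frac{d}{dz}\left(\boldsymbol{v}_a^{\dagger}\mathcal{S}\boldsymbol{v}_b\right)=z\,\boldsymbol{v}_a^{\dagger}\boldsymbol{v}_b$, as required.

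The step I expect to demand the most care is the bookkeeping of the $\partial/\partial x_i$ terms — correctly commuting $\partial/\partial x_i$ through both the $z$-evolution of $\boldsymbol{v}_b$ and the factor $\mathcal{K}^{\dagger}$ coming from $\boldsymbol{v}_a^{\dagger}$, and recognising that the single genuinely new contribution beyond the matrix case of Proposition \ref{panagopo1} is the term $\mathcal{Q}^{-1}\mathcal{H}\frac{x_i}{r^2}(\sigma_i\otimes 1_n)$. It is exactly this term that converts the spurious $\mathcal{Q}^{-1}$ generated by differentiating $z\,\mathcal{Q}^{-1}$ into a cancellation, and it is the mechanism by which the ordinary-derivative operator originally written by Panagopoulos is corrected into the partial-derivative operator $\mathcal{S}$; on a single coordinate axis the two operators coincide, which explains why his axis check did not detect the discrepancy.
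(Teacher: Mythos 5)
Your proposal is correct and takes essentially the same route as the paper: the paper's Appendix A proof likewise splits $\mathcal{S}=\mathcal{S}_1+\mathcal{S}_2$ with $\mathcal{S}_1=\mathcal{Q}^{-1}z$, $\mathcal{S}_2=\mathcal{Q}^{-1}\mathcal{H}\frac{x_i}{r^2}\partial_{x_i}$ and verifies $\mathcal{D}(\mathcal{S}_1)=z\,1_{2n}+\mathcal{Q}^{-1}$, $\mathcal{D}(\mathcal{S}_2)=-\mathcal{Q}^{-1}$ using (\ref{panrel}) and (\ref{qprops}). Your bookkeeping reproduces exactly the same cancellations in different packaging: the vanishing of your coefficient $\frac{x_i}{r^2}\,\mathcal{D}(\mathcal{Q}^{-1}\mathcal{H})$ is the paper's vanishing bracket $\mathcal{Q}\mathcal{H}+(\mathcal{H}+\mathcal{F})\mathcal{H}-\mathcal{H}(\mathcal{H}+\mathcal{F})=0$, and your term $\mathcal{Q}^{-1}\mathcal{H}\frac{x_i}{r^2}(\sigma_i\otimes 1_n)=-\mathcal{Q}^{-1}$ is the paper's $-\mathcal{Q}^{-1}\mathcal{H}x_i(\partial_i\mathcal{H})/r^2$ cancelling the $+\mathcal{Q}^{-1}$ from $\mathcal{D}(\mathcal{S}_1)$.
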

\begin{proof}
Set $ \mathcal{S}_1=\mathcal{Q}^{-1} z$ and 
$\mathcal{S}_2=\mathcal{Q}^{-1} \mathcal{H}\frac{x_{i}}{r^2}\frac{\partial}{\partial x_{i}}$.
Then
\begin{align*}
\mathcal{D}(\mathcal{S}_1)&=\frac{\mathrm{d}}{\mathrm{d}z}\left(z\mathcal{Q}^{-1}\right)
 -z[T_4, \mathcal{Q}\sp{-1}]\
-z
(\mathcal{H}+\mathcal{F})\mathcal{Q}^{-1}-z\mathcal{Q}^{-1}(\mathcal{H}
+\mathcal{F})
=z\mathcal{D}(\mathcal{Q}^{-1})+\mathcal{Q}^{-1}
\\
&=z1_{2n}+\mathcal{Q}^{-1}.
\end{align*}
Further, again using (\ref{panrel}),
\begin{align*}
r^2\,\mathcal{D}(\mathcal{S}_2)
&=\frac{\mathrm{d}}{\mathrm{d}z}\left(\mathcal{Q}^{-1} \mathcal{H}x_{i}\frac{\partial}{\partial x_{i}}\right)
-[T_4, \mathcal{Q}^{-1} \mathcal{H}x_{i}\frac{\partial}{\partial x_{i}}]
-(\mathcal{H}+\mathcal{F})\mathcal{Q}^{-1} \mathcal{H}x_{i}\frac{\partial}{\partial x_{i}}
-\mathcal{Q}^{-1} \mathcal{H}x_{i}\frac{\partial}{\partial x_{i}}(\mathcal{H}+\mathcal{F})
\\
&=\mathcal{D}(\mathcal{Q}^{-1})\mathcal{H}x_{i}\frac{\partial}{\partial x_{i}}
\;\;+\mathcal{Q}^{-1}(\mathcal{H}+\mathcal{F}) \mathcal{H}x_{i}\frac{\partial}{\partial x_{i}}-\mathcal{Q}^{-1} \mathcal{H}x_{i}\frac{\partial}{\partial x_{i}}(\mathcal{H}+\mathcal{F})\\
&= \mathcal{H}x_{i}\frac{\partial}{\partial x_{i}}+
\mathcal{Q}^{-1}(\mathcal{H}+\mathcal{F}) \mathcal{H}x_{i}\frac{\partial}{\partial x_{i}}
-\mathcal{Q}^{-1} \mathcal{H(\mathcal{H}+\mathcal{F})}x_{i}\frac{\partial}{\partial x_{i}}
-\mathcal{Q}^{-1}\mathcal{H}x_{i}\left({\partial_i \mathcal{H}} \right)
\\
&=\mathcal{Q}^{-1}\left[  
\mathcal{Q}\mathcal{H}  +(\mathcal{H}+\mathcal{F}) \mathcal{H}
- \mathcal{H(\mathcal{H}+\mathcal{F})}
 \right] 
x_{i}\frac{\partial}{\partial x_{i}}
-\mathcal{Q}^{-1}\mathcal{H}x_{i}\left({\partial_i \mathcal{H}} \right)\\
&=-\mathcal{Q}^{-1}r^2
\end{align*}
Upon combining these,
$\mathcal{D}(\mathcal{S})=
\mathcal{D}(\mathcal{S}_1+\mathcal{S}_2)=z1_{2n}
$
and the result follows.

\end{proof}
We note that we may write $\mathcal{S}_2)$ more symmetrically if needed:
\[ 
\mathcal{S}_2=\frac12\left[
\overleftarrow{ \frac{\partial}{\partial x_{i}}  }\, \frac{x_{i}}{r^2}
 \mathcal{H}\mathcal{Q}^{-1}+
\mathcal{Q}^{-1} \mathcal{H}\frac{x_{i}}{r^2}  \overrightarrow{\frac{\partial}{\partial x_{i}}}
\right].
\]

\noindent{\bf{Remark:}} In \cite{panagopo83} Panagopolous gave the ordinary differential operator
$ \mathcal{S}'=\mathcal{Q}^{-1} \left( z+2\mathcal{H}\,\frac{\mathrm{d}}{\mathrm{d} r^2}   \right)$ for
which
$\mathcal{D}(\mathcal{S}')=z1_{2n}+\mathcal{Q}^{-1}\left[ \frac{\mathrm{d}\mathcal{H}}{\mathrm{d}r^2}, \mathcal{H}   \right]$; as we are acting on functions and gauge transformations that are not just a function
of $r$ we have replaced this with our partial differential operator $\mathcal{S}$.


\begin{proposition}\label{panagopo3} With $\mathcal{Q}$ as previously given we have the antiderivative
\begin{align}
\int \boldsymbol{{v}}_a\sp\dagger\frac{\partial}{\partial
x_i}\boldsymbol{{v}}_b \mathrm{d}z
&=\boldsymbol{{v}}_a\sp\dagger\mathcal{Q}^{-1} \left[
\frac{\partial}{\partial x_i}+\mathcal{H}\frac{z}{r^2}\,
x_i+\mathcal{H}\frac{\imath}{r^2} \left(
\boldsymbol{x}\times\boldsymbol{\nabla}\right)_i \right]\boldsymbol{{v}}_b
.\label{parel3}
\end{align}
\end{proposition}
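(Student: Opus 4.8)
The plan is to follow the antiderivative strategy of Propositions~\ref{panagopo1} and~\ref{panagopo2}: setting $\mathcal{A}=\partial/\partial x_i$, I seek an operator $\mathcal{B}$ with $\mathcal{D}(\mathcal{B})=\mathcal{A}$, for then $\int\mathrm{d}z\,\boldsymbol{v}_a^\dagger\partial_i\boldsymbol{v}_b=\boldsymbol{v}_a^\dagger\mathcal{B}\boldsymbol{v}_b$. The candidate is the bracketed operator of~(\ref{parel3}), which I split as $\mathcal{B}=\mathcal{S}_1+\mathcal{S}_2+\mathcal{S}_3$ with $\mathcal{S}_1=\mathcal{Q}^{-1}\partial_i$, $\mathcal{S}_2=\mathcal{Q}^{-1}\mathcal{H}\,z x_i/r^2$ and $\mathcal{S}_3=\mathcal{Q}^{-1}\mathcal{H}\,(\imath/r^2)\epsilon_{ijk}x_j\partial_k$, and compute $\mathcal{D}$ of each. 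The novelty relative to the earlier propositions is that $\mathcal{S}_1$ and $\mathcal{S}_3$ contain the spatial derivatives $\partial_i,\partial_k$; I would first check that the identity $\frac{\mathrm d}{\mathrm dz}\big(\boldsymbol{v}_a^\dagger\mathcal{B}\boldsymbol{v}_b\big)=\boldsymbol{v}_a^\dagger\mathcal{D}(\mathcal{B})\boldsymbol{v}_b$ persists for such $\mathcal{B}$, provided $\mathcal{B}(\mathcal{H}+\mathcal{F})$ is read as operator composition so that $\partial_k$ acts by Leibniz on $\mathcal{H}$; the $x_4$ terms still cancel and the $T_4$ terms again assemble into $-[T_4,\mathcal{B}]$, since both $x_4$ and $T_4$ are independent of $\boldsymbol{x}$.

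For $\mathcal{S}_1$ I would use $\mathcal{D}(\mathcal{Q}^{-1})=1_{2n}$ from~(\ref{panrel}); the only extra contribution is the Leibniz term $-\mathcal{Q}^{-1}(\partial_i\mathcal{H})$ arising from $-\mathcal{S}_1(\mathcal{H}+\mathcal{F})$, so that with $\partial_i\mathcal{H}=-\sigma_i\otimes 1_n$ (from~(\ref{pandefs})) one obtains $\mathcal{D}(\mathcal{S}_1)=\partial/\partial x_i+\mathcal{Q}^{-1}\sigma_i\otimes 1_n$. For $\mathcal{S}_2$ the weight $z x_i/r^2$ is a scalar and commutes through every matrix, so the decisive input is the constancy identity $\mathcal{D}(\mathcal{Q}^{-1}\mathcal{H})=0$ of Theorem~\ref{constancythm} (equivalently~(\ref{constancy})): the entire $\mathcal{D}(\mathcal{Q}^{-1}\mathcal{H})$ piece drops out and only the $z$-derivative of the explicit factor, $\tfrac{\mathrm d}{\mathrm dz}(z x_i/r^2)=x_i/r^2$, survives, giving $\mathcal{D}(\mathcal{S}_2)=\mathcal{Q}^{-1}\mathcal{H}\,x_i/r^2$.

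The term $\mathcal{S}_3$ is the crux. Here again $\mathcal{D}(\mathcal{Q}^{-1}\mathcal{H})=0$ annihilates the bulk of the expression, and what is left is the single Leibniz term in which $\partial_k$ differentiates $\mathcal{H}$ inside $-\mathcal{S}_3(\mathcal{H}+\mathcal{F})$, namely a multiple of $\mathcal{Q}^{-1}\mathcal{H}\,(\imath/r^2)\epsilon_{ijk}x_j(\partial_k\mathcal{H})$. I would evaluate this by inserting $\partial_k\mathcal{H}=-\sigma_k\otimes 1_n$ and $\mathcal{H}=-x_l\sigma_l\otimes 1_n$, reducing $\mathcal{H}\,\epsilon_{ijk}x_j(\partial_k\mathcal{H})$ to $\sum_{l,j,k}x_l x_j\epsilon_{ijk}\,\sigma_l\sigma_k$; the Pauli relation $\sigma_l\sigma_k=\delta_{lk}+\imath\,\epsilon_{lkm}\sigma_m$ together with $\epsilon_{ijk}\epsilon_{lmk}=\delta_{il}\delta_{jm}-\delta_{im}\delta_{jl}$ then collapses the double $\epsilon$-contraction, the symmetric part $x_j x_k\epsilon_{ijk}$ vanishing. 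Using $\mathcal{H}^2=r^2 1_{2n}$ and $\sum_j x_j\sigma_j\otimes 1_n=-\mathcal{H}$ (from~(\ref{qprops}), (\ref{pandefs})), this collapses $\mathcal{D}(\mathcal{S}_3)$ to a combination of exactly $\mathcal{Q}^{-1}\sigma_i\otimes 1_n$ and $\mathcal{Q}^{-1}\mathcal{H}\,x_i/r^2$.

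The proof then closes by summing: the spurious $\mathcal{Q}^{-1}\sigma_i\otimes 1_n$ from $\mathcal{S}_1$ and the $\mathcal{Q}^{-1}\mathcal{H}\,x_i/r^2$ from $\mathcal{S}_2$ must be cancelled precisely by $\mathcal{D}(\mathcal{S}_3)$, leaving $\mathcal{D}(\mathcal{B})=\partial/\partial x_i$ alone; it is exactly this required cancellation that fixes the coefficient $\imath/r^2$ and the orientation of the $\boldsymbol{x}\times\boldsymbol{\nabla}$ term, the off-axis correction to Panagopolous's ordinary-derivative operator noted in the Remark after Proposition~\ref{panagopo2}. I expect the main obstacle to be precisely the bookkeeping in $\mathcal{D}(\mathcal{S}_3)$: maintaining the correct order of the noncommuting factors $\mathcal{Q}^{-1}$, $\mathcal{H}$ and the scalar weights while isolating the single Leibniz term, and then verifying that the $\epsilon$-tensor/Pauli reduction returns these two terms with precisely the signs needed to cancel—rather than reinforce—the contributions of $\mathcal{S}_1$ and $\mathcal{S}_2$.
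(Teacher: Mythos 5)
Your outline reproduces the paper's own proof rather than a different route: the same splitting $\mathcal{B}=\mathcal{S}_1+\mathcal{S}_2+\mathcal{S}_3$ (the paper's $L_1,L_2,L_3$), the same observation that $\mathcal{D}$ of each piece reduces to $\partial_i$ plus Leibniz terms, and the same key identities $\mathcal{D}(\mathcal{Q}^{-1})=1_{2n}$ from (\ref{panrel}) and $\mathcal{Q}^{-1}[\mathcal{F},\mathcal{H}]=-\mathcal{H}$ from (\ref{qprops}); your appeal to Theorem~\ref{constancythm} in the operator form $\mathcal{D}(\mathcal{Q}^{-1}\mathcal{H})=0$ is exactly the combination of these two identities, so it is a repackaging, not a new idea. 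Up to that point your account is correct, including the sign $\mathcal{D}(\mathcal{S}_1)=\partial_i+\mathcal{Q}^{-1}\sigma_i\otimes 1_n$ and $\mathcal{D}(\mathcal{S}_2)=\mathcal{Q}^{-1}\mathcal{H}x_i/r^2$.

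The gap is at the step you yourself flag as the main obstacle and then do not perform. Saying that $\mathcal{D}(\mathcal{S}_3)$ \emph{must} cancel the other two contributions, and that this requirement \emph{fixes} the coefficient and orientation of the $\boldsymbol{x}\times\boldsymbol{\nabla}$ term, is circular: the proposition asserts a specific operator, signs included, and the entire content of the proof is the verification of the cancellation for that operator. This verification is also precisely where the sign-sensitivity lives, since $\mathcal{D}(\mathcal{S}_3)$ is quadratic in $\mathcal{H}$ while the terms it must cancel are linear in $\mathcal{H}$, so the outcome depends on the convention $\mathcal{H}=-\sum_j x_j\sigma_j\otimes 1_n$ of (\ref{pandefs}) --- a dependence invisible in Propositions~\ref{panagopo1}, \ref{panagopo2} and Theorem~\ref{constancythm}. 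Carrying your step out with $\partial_k\mathcal{H}=-\sigma_k\otimes 1_n$ gives
\begin{equation*}
\mathcal{D}(\mathcal{S}_3)
=-\mathcal{Q}^{-1}\mathcal{H}\,\frac{\imath}{r^2}\,\epsilon_{ijk}x_j(\partial_k\mathcal{H})
=-\frac{\imath}{r^2}\,\mathcal{Q}^{-1}\Bigl(\textstyle\sum_l x_l\sigma_l\Bigr)\epsilon_{ijk}x_j\sigma_k\otimes 1_n
=\mathcal{Q}^{-1}\sigma_i\otimes 1_n+\mathcal{Q}^{-1}\mathcal{H}\,\frac{x_i}{r^2},
\end{equation*}
where the last equality uses $-\imath(\boldsymbol{x}\cdot\boldsymbol{\sigma})\,\epsilon_{ijk}x_j\sigma_k=r^2\sigma_i-x_i(\boldsymbol{x}\cdot\boldsymbol{\sigma})$. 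This has the \emph{same} sign as $\mathcal{D}(\mathcal{S}_1)-\partial_i$ and $\mathcal{D}(\mathcal{S}_2)$, so the three pieces reinforce, $\mathcal{D}(\mathcal{B})=\partial_i+2\mathcal{Q}^{-1}\sigma_i\otimes 1_n+2\mathcal{Q}^{-1}\mathcal{H}x_i/r^2$, rather than cancel. A one-point check makes this unmistakable: at $\boldsymbol{x}=(0,1,0)$, $i=1$, the Leibniz term of $\mathcal{S}_3$ is $-\imath\mathcal{Q}^{-1}(-\sigma_2)(-\sigma_3)\otimes 1_n=+\mathcal{Q}^{-1}\sigma_1\otimes 1_n$, equal to, not opposite to, that of $\mathcal{S}_1$. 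So the cancellation you take for granted holds only if the angular-momentum term carries the opposite relative sign (equivalently, the opposite sign convention for $\mathcal{H}$); the paper's own appendix negotiates this point only through a compensating slip, writing the cubic term as $-\imath(\sum_j x_j\sigma_j)(x_2\sigma_3-x_3\sigma_2)$ when the two minus signs carried by $\mathcal{H}$ and by $\partial_k\mathcal{H}$ make it $+\imath(\cdots)$. A completed proof cannot wave this through: it must either do the Pauli--$\epsilon$ computation and confront the resulting sign, or correct the statement/convention accordingly.
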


\begin{proof}
Let $L=L_1+L_2+L_3$ with
\[ L_1=\mathcal{Q}^{-1}\frac{\partial}{\partial x_i},\quad
L_2= \mathcal{Q}^{-1}\mathcal{H}\frac{z}{r^2}\,x_i,\quad
L_3=\mathcal{Q}^{-1}\mathcal{H}\frac{\imath}{r^2}\left( \boldsymbol{x}\times
\boldsymbol{\nabla} \right)_i .
\]
We compute $\mathcal{D}(L_i)$, $i=1,2,3$. First
\begin{align*}
\mathcal{D}(L_1)&=\frac{\mathrm{d}}{\mathrm{d}z}\left( \mathcal{Q}^{-1} \frac{\partial
}{\partial x_i} \right)
-[T_4, \mathcal{Q}^{-1} \frac{\partial
}{\partial x_i} ]
-(\mathcal{H}+\mathcal{F})\mathcal{Q}^{-1}\frac{\partial}{\partial
x_i}-\mathcal{Q}^{-1}\frac{\partial }{\partial
x_i}(\mathcal{H}+\mathcal{F})\\
&=\mathcal{D}(\mathcal{Q}^{-1})
\frac{\partial}{\partial x_i}-\mathcal{Q}^{-1}\frac{\partial
\mathcal{H}}{\partial x_i}
=1_{2n}\frac{\partial}{\partial x_i}-\mathcal{Q}^{-1}\frac{\partial
\mathcal{H}}{\partial x_i}
\end{align*}
where we use Proposition \ref{panagopo1}. Next
\begin{align*}
\mathcal{D}(L_2)&=\frac{\mathrm{d}}{\mathrm{d}z}\left( \mathcal{Q}^{-1}
\mathcal{H}\frac{z}{r^2}x_i \right)
-[T_4, \mathcal{Q}^{-1}\mathcal{H}\frac{z}{r^2}x_i]
-(\mathcal{H}+\mathcal{F})
\mathcal{Q}^{-1} \mathcal{H}\frac{z}{r^2}x_i-\mathcal{Q}^{-1}
\mathcal{H}\frac{z}{r^2}x_i(\mathcal{H}+\mathcal{F})\\
&=
\mathcal{D}({Q}^{-1})\mathcal{H}\frac{z}{r^2}x_i
+\mathcal{Q}^{-1}(\mathcal{H}+\mathcal{F})\mathcal{H}\frac{z}{r^2}x_i
-\mathcal{Q}^{-1}\mathcal{H}\frac{z}{r^2}x_i(\mathcal{H}+\mathcal{F})
+\mathcal{Q}^{-1}\mathcal{H}\frac{x_i}{r^2}\\
&=
\mathcal{H}\frac{z}{r^2}x_i
+\mathcal{Q}^{-1}\left[\mathcal{F}, \mathcal{H}\right]\frac{z}{r^2}x_i
+\mathcal{Q}^{-1}\mathcal{H}\frac{x_i}{r^2}\\
&=\mathcal{Q}^{-1}\mathcal{H}\frac{x_i}{r^2}
\end{align*}
upon using  (\ref{qprops}).
Next we calculate $\mathcal{D}(L_3)$ for $i=1$. We have
\begin{align*}
\mathcal{D}(L_3)&=\frac{\mathrm{d}}{\mathrm{d}z}\left[ \mathcal{Q}^{-1}
\mathcal{H}\frac{\imath}{r^2}\left( x_2\frac{\partial}{\partial
x_3}-x_3\frac{\partial}{\partial x_2} \right) \right]
-\left[T_4, \mathcal{Q}^{-1}
\mathcal{H}\frac{\imath}{r^2}\left( x_2\frac{\partial}{\partial
x_3}-x_3\frac{\partial}{\partial x_2} \right)\right]
\\
&\quad -(\mathcal{H}+\mathcal{F})\left[ \mathcal{Q}^{-1}
\mathcal{H}\frac{\imath}{r^2}\left( x_2\frac{\partial}{\partial
x_3}-x_3\frac{\partial}{\partial x_2} \right) \right]
-\left[ \mathcal{Q}^{-1} \mathcal{H}\frac{\imath}{r^2}\left(
x_2\frac{\partial}{\partial x_3}-x_3\frac{\partial}{\partial x_2}
\right) \right](\mathcal{H}+\mathcal{F})\\
&=\mathcal{D}({Q}^{-1})\mathcal{H}\frac{\imath}{r^2}\left(
x_2\frac{\partial}{\partial x_3}-x_3\frac{\partial}{\partial x_2}
\right)\\
&\quad +\mathcal{Q}^{-1}(\mathcal{H}+\mathcal{F})\mathcal{H}\frac{\imath}{r^2}\left(
x_2\frac{\partial}{\partial x_3}-x_3\frac{\partial}{\partial x_2}
\right)
-\mathcal{Q}^{-1}\mathcal{H}\frac{\imath}{r^2}\left(
x_2\frac{\partial}{\partial x_3}-x_3\frac{\partial}{\partial x_2}
\right)(\mathcal{H}+\mathcal{F})\\
&=\mathcal{H}\frac{\imath}{r^2}\left( x_2\frac{\partial}{\partial
x_3}-x_3\frac{\partial}{\partial x_2}
\right)+\mathcal{Q}^{-1}\left[\mathcal{F},\mathcal{H}\right]\frac{\imath}{r^2}\left(
x_2\frac{\partial}{\partial x_3}-x_3\frac{\partial}{\partial x_2}
\right)\\
&\qquad-\mathcal{Q}^{-1}\mathcal{H}\frac{\imath}{r^2}\left(
x_2\frac{\partial}{\partial x_3}-x_3\frac{\partial}{\partial x_2}
\right)\mathcal{H}\\
&=-\mathcal{Q}^{-1}\mathcal{H}\frac{\imath}{r^2}\left(
x_2\frac{\partial}{\partial x_3}-x_3\frac{\partial}{\partial x_2}
\right)\mathcal{H}.
\end{align*}
again using  (\ref{qprops}). Altogether we have
\begin{align*}
\mathcal{D}(L)=1_{2n}\frac{\partial}{\partial
x_1}+\mathcal{Q}^{-1}\left\{-\frac{\partial \mathcal{H}}{\partial
x_1}+\mathcal{H}\frac{x_1}{r^2}-\mathcal{H}\frac{\imath}{r^2} \left(
x_2\frac{\partial}{\partial x_3}-x_3\frac{\partial}{\partial x_2}
\right)\mathcal{H}\right\}
\end{align*}
Multiplying the expression in the parentheses by $-r^2$ gives
\begin{align*}
-\left\{\cdot\right\}r^2&=-\sigma_1\otimes 1_2(x_1^2+x_2^2+x_3^2)
+\sigma_1\otimes 1_2x_1^2+\sigma_2\otimes 1_2x_1
x_2+\sigma_3\otimes 1_2x_1 x_3\\
&\quad
-\imath(\sigma_1\otimes 1_2 x_1 +\sigma_2\otimes 1_2 x_2+\sigma_3\otimes 1_2 x_3  )
\times (x_2\sigma_3\otimes 1_2-x_3\sigma_2\otimes 1_2)
\end{align*}
which vanishes by standard relations, proving the result.

\end{proof}

\section{The Nahm Ansatz}
Introduce the shorthand
\begin{align}
R_j=T_j+\imath x_j 1_n,\quad j=1,2,3, \qquad 
\gamma =-\im \left[(x_1-\im x_2)\zeta-\im x_3 \right].
\end{align}
Upon substituting (\ref{Nansatz}) into $\Delta\boldsymbol{w}=0$ we find
\[
\begin{split}
0&=|s>\otimes\left(\im \dfrac{\mathrm{d}}{\mathrm{d}z}-\im T_4+
\im\gamma
+
\boldsymbol{\widehat{ u}}\cdot\boldsymbol{R}\right)\boldsymbol{\widehat{ w}(z)}
\\
&\qquad +
\sigma_k||s>\otimes\left(\im {\widehat{ u}}\sp{k}\left(\dfrac{\mathrm{d}}{\mathrm{d}z}-T_4+\gamma\right)
+R\sp{k}+
\im(\boldsymbol{R}\times\boldsymbol{\widehat{ u}})\sp{k}\right)\boldsymbol{\widehat{ w}}(z).
\end{split}
\]
and so we require
\begin{align}
0&=\left(\im \dfrac{\mathrm{d}}{\mathrm{d}z}-\im T_4+\im\gamma+
\boldsymbol{\widehat{ u}}\cdot\boldsymbol{R}\right)\boldsymbol{\widehat{ w}}(z),\label{nahm1}\\
0&=\mathcal{L}_k\boldsymbol{\widehat{ w}}(z):=\left(\im {\widehat{ u}}\sp{k}\left(\dfrac{\mathrm{d}}{\mathrm{d}z}-T_4 +\gamma\right)+R\sp{k}+
\im(\boldsymbol{R}\times\boldsymbol{\widehat{ u}})\sp{k}\right)\boldsymbol{\widehat{ w}}(z).\label{nahm2a}
\end{align}
The consistency of these equations imposes various constraints.
First consider
\begin{align*}
\left[\mathcal{L}_1,\mathcal{L}_2
\right]&=(\im {\widehat{ u}}\sp1+{\widehat u}\sp2{\widehat{ u}}\sp3)
\left(\dot{T_2}-[T_4,T_1]-[T_3,T_1]\right)-(\im {\widehat{ u}}\sp2-{\widehat u}\sp1{\widehat u}\sp3)
\left(\dot{T_1}-[T_4,T_2]-[T_2,T_3]\right)
\\
&\qquad-(1-({\widehat{ u}}\sp3)\sp2)\left(\dot{T_3}-[T_4,T_3]-[T_1,T_2]\right)+(1-\boldsymbol{\widehat{ u}}\cdot\boldsymbol{\widehat u})\,\left(\dot{T_3}-[T_4,T_3]\right).
\end{align*}
Thus provided $\boldsymbol{\widehat{ u}}(\boldsymbol{x})$ is a unit
vector and the $T_i$'s satisfy the Nahm equations we have
consistency of the equations $\mathcal{L}_k\boldsymbol{\widehat
w(z)}=0$.

At this stage we introduce a convenient parametrization
(reflected in Hitchin's minitwistor construction). Let
$\boldsymbol{ y}\in \mathbb{C}\sp3$ be a null vector. We may
consider $\boldsymbol{ y}\in \mathbb{P}\sp2$ and parameterize
$\boldsymbol{ y}$ as
\begin{equation}\label{defhy} \boldsymbol{
y}=\left(\frac{1+\zeta^2}{2\im},\frac{1-\zeta^2}2,-\zeta \right).
\end{equation}
Then
$$\boldsymbol{ y}\cdot \boldsymbol{\overline
y}=\frac{(1+|\zeta|\sp2)\sp2}2,\qquad \boldsymbol{
y}\cdot \boldsymbol{ y}=0.$$
The signs here have been chosen so that
$$L(\zeta):=2\im \boldsymbol{ y}\cdot\boldsymbol{T}=
(T_1+\im T_2)-2\im T_3\,\zeta+(T_1-\im T_2)\,\zeta\sp2.$$ In due
course we will see this to be our Lax matrix. Set
\begin{equation}\label{defhu}
\boldsymbol{\widehat{u}}=\boldsymbol{\widehat{u}}(\zeta):=\im\,\frac{\boldsymbol{ y}\times \boldsymbol{\overline
y}}{\boldsymbol{ y}\cdot \boldsymbol{\overline
y}}=\frac{1}{1+|\zeta|^2}\left(\im(\zeta-\overline{\zeta}),\;(\zeta+\overline{\zeta}),
1-|\zeta|^2\right).
\end{equation}
Then
$$\boldsymbol{\widehat{
u}}\times\boldsymbol{ y}=-\im\boldsymbol{ y},\qquad
\boldsymbol{\widehat{ u}}\times\boldsymbol{\overline
y}=\im\boldsymbol{\overline y}.$$ The three vectors
$\Re(\boldsymbol{ y}),\Im(\boldsymbol{ y})$ and $ \boldsymbol{\widehat{u}}$
form an orthogonal basis in $\mathbb{R}^3$ with
$|\boldsymbol{\widehat{ u}}|=1$, whence any $\boldsymbol{
v}\in\mathbb{R}^3$ may be written as
\begin{equation*}\label{basisuy}\boldsymbol{ v}=\boldsymbol{\widehat{
u}}\,(\boldsymbol{\widehat{ u}}\cdot \boldsymbol{v})+
\boldsymbol{\overline y}\left(\frac{\boldsymbol{ y}\cdot
\boldsymbol{v}}{ \boldsymbol{ y}\cdot\boldsymbol{\overline
y}}\right) + \boldsymbol{ y}\left(\frac{\boldsymbol{\overline
y}\cdot \boldsymbol{v}}{ \boldsymbol{ y}\cdot\boldsymbol{\overline
y}}\right).
\end{equation*}
In particular,
\begin{equation}\label{basisucy}\boldsymbol{ v}+\im
\boldsymbol{ v}\times\boldsymbol{\widehat{ u}}=
\boldsymbol{\widehat{u}}\,(\boldsymbol{\widehat{ u}}\cdot \boldsymbol{v})+2
\boldsymbol{\overline y}\left(\frac{\boldsymbol{ y}\cdot
\boldsymbol{v}}{ \boldsymbol{ y}\cdot\boldsymbol{\overline
y}}\right) .
\end{equation}
 We record that
\begin{align*}
\overline{\boldsymbol{ y}(\zeta)}&=-{\overline\zeta}\sp2\, \boldsymbol{ y}(-1/{\overline
\zeta}),\qquad\boldsymbol{\widehat{ u}}(-1/{\overline
\zeta})=-\boldsymbol{\widehat{ u}}(\zeta),\\
\boldsymbol{\widehat{ u}}&=(-\im\zeta\sp{-1},\zeta\sp{-1},-1)-\frac{2\boldsymbol{ y}}
{\zeta(1+|\zeta|\sp2)}
=(\im\zeta,\zeta,1)+\frac{2{\overline\zeta}\boldsymbol{ y}}
{1+|\zeta|\sp2},\\
\boldsymbol{\widehat{ u}}\cdot\boldsymbol{T}&=
-\im\left[(T_1+\im T_2)\zeta\sp{-1}-\im T_3\right]-
\frac{2\boldsymbol{ y}\cdot\boldsymbol{ T}}
{\zeta(1+|\zeta|\sp2)}=
\im\left[(T_1-\im T_2)\zeta-\im T_3\right]+
\frac{2{\overline\zeta}\boldsymbol{ y}\cdot\boldsymbol{ T}}
{1+|\zeta|\sp2}.
\end{align*}

Parameterizing $\boldsymbol{\widehat{ u}}$ as above and using
(\ref{basisucy}) we may write
\begin{align*}
\im \boldsymbol{\widehat{ u}}\left(\dfrac{\mathrm{d}}{\mathrm{d}z}-T_4 +\gamma \right)
+\boldsymbol{R}+
\im\boldsymbol{R}\times\boldsymbol{\widehat{ u}}&=\im \boldsymbol{\widehat{ u}}\left(\dfrac{\mathrm{d}}{\mathrm{d}z}-T_4  +\gamma \right)
+
\boldsymbol{\widehat{
u}}\,(\boldsymbol{\widehat{ u}}\cdot \boldsymbol{R})+2
\boldsymbol{\overline y}\left(\frac{\boldsymbol{ y}\cdot
\boldsymbol{R}}{ \boldsymbol{ y}\cdot\boldsymbol{\overline
y}}\right)\\& =\boldsymbol{\widehat{ u}}\left(\im  \left(\dfrac{\mathrm{d}}{\mathrm{d}z}-T_4
+\gamma \right)
+
\boldsymbol{\widehat{ u}}\cdot\boldsymbol{R}\right)+2
\boldsymbol{\overline y}\left(\frac{\boldsymbol{ y}\cdot
\boldsymbol{R}}{ \boldsymbol{ y}\cdot\boldsymbol{\overline
y}}\right)
\end{align*}
and as a consequence (\ref{nahm1}, \ref{nahm2a}) are equivalent to
\begin{align}
0&=\left(\im \dfrac{\mathrm{d}}{\mathrm{d}z}-\im T_4+ \im \gamma+
\boldsymbol{\widehat{ u}}\cdot\boldsymbol{R}\right)\boldsymbol{\widehat{ w}}(z),\label{nahm1b}\\
0&=\left(\boldsymbol{ y}\cdot
\boldsymbol{R}\right)\boldsymbol{\widehat{ w}}(z).\label{nahm2b}
\end{align}
The remaining consistency to be checked is then
\begin{align*}
\left[\im \dfrac{\mathrm{d}}{\mathrm{d}z}-\im T_4 +\im \gamma+
\boldsymbol{\widehat{ u}}\cdot\boldsymbol{R},\,
\boldsymbol{ y}\cdot
\boldsymbol{R}
\right]
&=
\im {\boldsymbol{y}}\cdot\left({\dot{\boldsymbol{T}}}- [T_4, \boldsymbol{T}]\right)
+
\left[\boldsymbol{\widehat{ u}}\cdot\boldsymbol{T},\,
{\boldsymbol{y}\cdot{\boldsymbol{T}}}
\right]
=0,
\end{align*}
which upon use of $\boldsymbol{\widehat{ u}}\times\boldsymbol{
y}=-\im\boldsymbol{ y}$ is equivalent to Nahm's equations.

Equally from
$$\boldsymbol{\widehat{ u}}\cdot\boldsymbol{R}=
-\im\left[(R_1+\im R_2)\zeta\sp{-1}-\im R_3\right]-
\frac{2\boldsymbol{ y}\cdot\boldsymbol{ R}}
{\zeta(1+|\zeta|\sp2)}=
\im\left[(R_1-\im R_2)\zeta-\im R_3\right]+
\frac{2{\overline\zeta}\boldsymbol{ y}\cdot\boldsymbol{ R}}
{1+|\zeta|\sp2}$$ we may write the equations as
\begin{align*}
0&=\left( \dfrac{\mathrm{d}}{\mathrm{d}z}-T_4 +\gamma
+\left[(R_1-\im R_2)\zeta-\im R_3\right]\right)\boldsymbol{\widehat{ w}}(z)
=\left( \dfrac{\mathrm{d}}{\mathrm{d}z}+M
\right)\boldsymbol{\widehat{ w}}(z),\\
0&=\left(\boldsymbol{ y}\cdot
\boldsymbol{R}\right)\boldsymbol{\widehat{ w}}(z),
\end{align*}
{where} \begin{equation} M(\zeta):=(T_1-\im T_2)\zeta-\im T_3 -T_4.
\end{equation}

The equations we have obtained are just the Lax equations
\begin{align}
0&={2\im}\left(\boldsymbol{ y}\cdot
\boldsymbol{R}\right)\boldsymbol{\widehat{ w}}(z)
=\left(L(\zeta)-\eta\right)\boldsymbol{\widehat{ w}}(z), \nonumber
\\
0&=\left( \dfrac{\mathrm{d}}{\mathrm{d}z}+M
\right)\boldsymbol{\widehat{ w}}(z),\nonumber\\
\intertext{for which $\dot L=[L,M]$ and where our construction defines $\eta$ to be}
\eta&={2\boldsymbol{y}\cdot{\boldsymbol{x}}}=(x_2-\im x_1)-2 x_3\zeta-(x_2+\im x_1)\zeta^2.
\nonumber 
\end{align}

From the first of these we see that
$$0=\det\left(L(\zeta)-\eta\right),$$
which gives the equation of the spectral curve $\mathcal{C}$. Upon
using $\overline{\boldsymbol{ y}(\zeta)}=-{\overline\zeta}\sp2\,
\boldsymbol{ y}(-1/{\overline \zeta})$ we see from
$$0=\det\left(L(\zeta)-\eta\right)\sp\dagger
=\det\left(L(\zeta)\sp\dagger-\overline{\eta}\right)=
\det\left({2\im\overline{\boldsymbol{ y}(\zeta)}\cdot
\boldsymbol{T}}-\overline{\eta}\right)=
\det\left({-2\im{\overline\zeta}\sp2{\boldsymbol{ y}(-1/{\overline
\zeta})}\cdot
\boldsymbol{T}}-\overline{\eta}\right)
$$
that the spectral curve is invariant under
$$\mathfrak{J}:\, (\zeta,\eta)\rightarrow
(-\frac1{\overline\zeta},-\frac{\overline\eta}{{\overline\zeta}\sp2}).$$
The spectral curve then has the form
\begin{equation}
P(\eta,\zeta):=\eta^n+a_1(\zeta)\eta^{n-1}+\ldots+a_n(\zeta)=0, \quad
\mathrm{deg}\, a_k(\zeta) \leq 2k,
\end{equation}
and the genus of $\mathcal{C}$ is $g=(n-1)^2$.

\section{Direct Integration of the Lax Equations}
When Nahm gave his ansatz he also suggested a direct integration of the resulting Lax
equations; Panagopoulos \cite[\S3] {panagopo83} tried implementing this for the charge $2$ case.
We will review here this approach and conclude by contrasting it with the proposed Baker-Akhiezer approach we have described.

Underlying the algebro-geometric description of Lax pairs and the spectral curve is that 
$\dim_{\mathbb{C}}\ker\left[L(\zeta)-\eta\right]=1$. Thus if $\boldsymbol{f}_i$ is an  eigenvector
of $L$ with associated eigenvalue $\eta_i$ then so too is $\boldsymbol{f}_i h_i(z)$. The idea is
that given a solution of (\ref{eqlax1}) to find an appropriate $h_i(z)$ such that (\ref{eqlax2}) is also
satisfied. To implement this we need to know both an eigenvector $\boldsymbol{f}_i $ and its derivative
to subsequently obtain a differential equation for $h_i(z)$. To determine these we make some simple
observations:
\begin{enumerate}
\item Suppose $L:=L(\zeta)$  satisfies the Lax equation $\dot L=[L,M]$. Then for any $(\zeta,\eta)$
$$\det(L(\zeta)-\eta)\, 1_n= (L(\zeta)-\eta) \Adj(L(\zeta)-\eta)$$
is constant, where $\Adj$ denotes the adjugate matrix. Differentiating this yields
\begin{equation*}
\left (L(\zeta)-\eta\right) 
\left(\frac{\mathrm{d}}{\mathrm{d}z} \adj (L(\zeta)-\eta)  -[
\adj (L(\zeta)-\eta) ,M]\right)=0
\end{equation*}
and using the generic
invertibility of $L(\zeta)-\eta$ we obtain
\begin{equation}\label{diffadj}
\frac{\mathrm{d}}{\mathrm{d}z} \adj (L(\zeta)-\eta)  =[
\adj (L(\zeta)-\eta) ,M].
\end{equation}

\item  For any constant vector $\boldsymbol{\nu}$ let $P=(\zeta,\eta_i)$ lie on the spectral curve. Then
\begin{equation}
\label{nevw}
\boldsymbol{f}_i=\adj(L-\eta_i)\boldsymbol{\nu}
\end{equation} 
is an eigenvector of $L(\zeta)$ with eigenvalue $\eta_i$ as
$$(L-\lambda_i )\boldsymbol{f}_i
=(L-\eta_i)\adj(L-\eta_i)\boldsymbol{\nu} =\det(L-\eta_i)\boldsymbol{\nu}=0.$$ 

\end{enumerate}
Using these observations we seek a function $h_i(z)$ such that 
$F=\adj(L-\eta_i)\boldsymbol{\nu}h_i(z)$ is solution of (\ref{eqlax2}).
Now
\begin{align*} \dot
F&=[\adj(L-\eta_i),M]\boldsymbol{\nu} h_i(z)+\adj(L-\eta_i)\boldsymbol{\nu} \dot
h_i(z)
\\&
=-MF+F h_i\sp{-1}\dot h_i+\adj(L-\eta_i)M\boldsymbol{\nu}
h_i(z)\end{align*} 
Taking the inner product with an arbitrary vector
$\boldsymbol{\mu}$ and requiring $F$ satisfy (\ref{eqlax2}) yields the differential equation
\begin{equation}\label{directintegration}
h_i\sp{-1}\frac{\mathrm{d}h_i}{\mathrm{d}z}=-
\frac{\boldsymbol{\mu}\sp{T}\adj(L-\eta_i)M\boldsymbol{\nu}}{\boldsymbol{\mu}\sp{T}\adj(L-\eta_i)\boldsymbol{\nu}}.
\end{equation}

We see then that to employ this approach we need both the Nahm data to be able to determine the right-hand side of (\ref{directintegration}) and to be able to integrate the subsequent rational expressions. For
$n=2$ this is possible. For higher charges we can only construct Nahm data up to the gauge transformation described in the main text. Let us imagine we make such a gauge transformation
and then use the approach of this appendix to solve  (\ref{eqlax1},\ref{eqlax2}). Now the
right-hand side of (\ref{directintegration}) is in principle known and an integration is required.
By way of contrast the Baker-Akhiezer approach allows the construction
of the gauge transform of both $L$ and the solution of (\ref{eqlax2}) without further integration.

\bibliographystyle{plain}

\def\cprime{$'$} \def\cdprime{$''$} \def\cprime{$'$}

\end{document}